\newcommand{\I}[2]{I_{#1,#2}}
\newcommand{\Dif}[2]{B_{#1}-B_{#2}} 
\newcommand{\Dift}[3]{B_{#1}^{#3}-B_{#2}^{#3}} 
\newcommand{\Fdif}[2]{\beta_{{#2},{#1}}(\Dif{#1}{#2})} 
\newcommand{\Fdift}[3]{\beta_{{#2},{#1}}(\Dift{#1}{#2}{#3})} 
\newcommand{\B}[1]{B_{#1}} 
\newcommand{\Bt}[2]{B_{#1}^{#2}} 
\newcommand{\R}{\mathbb{R}} 
\newcommand{\N}{\mathbb{N}} 
\newcommand{\SR}{\mathbf{S}}
\newcommand{\Mal}{\mathbf{M}}
\newcommand{\Rec}{\mathbf{R}}
\newcommand{\BF}{\mathbf{B}}
\newcommand{\Ins}{\mathbf{I}}
\newcommand{\bias}[1]{\beta_{#1}}
\newcommand{\source}{\mathcal{S}}
\tikzset{->,>=stealth',auto,node distance=2cm,thick,initial text=}
\tikzstyle{accepting}=[path picture={%
\tikzset{
    position/.style args={#1:#2 from #3}{
        at=(#3.#1), anchor=#1+180, shift=(#1:#2)
    }
}
\definecolor{red1}{RGB}{163, 0, 0}
\definecolor{red2}{RGB}{196,70,1}
\definecolor{red3}{RGB}{245,118,0}
\definecolor{blue1}{RGB}{5,79,185}
\definecolor{blue2}{RGB}{0,115,230}
\definecolor{blue3}{RGB}{139,171,241}
 \def\Bleck{} 
\def\ReviewsOn{} 
\def\CommentsOn{} 
	\def\Bleck{}
	\undef\CommentsOn{}
	\undef\ReviewsOn{} 
		\undef\Bleck{} 
		\def\CommentsOn{}
		\def\ReviewsOn{} 
			\undef\Bleck{} 
			\undef\CommentsOn{}
			\def\ReviewsOn{} 
	\newcommand{\DraftOnly}[1]{} %
	\undef\CommentsOn{} 
	\undef\ReviewsOn{} 
		\newcommand{\DraftOnly}[1]{} %
		\def\ReviewsOn{} 
		\newcommand{\DraftOnly}[1]{#1}
	\newcommand{\review}[1]{
		{{\leavevmode\color{magenta}{#1}}}\xspace
	}
	\newcommand{\reviewfootnote}[1]{\footnote{\review{\textbf{Nota de revisão:} #1}}} 
	\newcommand{\review}[1]{#1} 
	\newcommand{\reviewfootnote}[1]{} 
	\newcommand{\draftbox}[2]{ 
		\begin{center}
			\begin{tcolorbox}[
				enhanced jigsaw,
				breakable,
				left=2pt,
				right=2pt,
				top=2pt,
				bottom=2pt,
				bicolor,
				colframe = black!70,
				colbacklower = black!70,
				colback  = purple!5!white,
				coltitle = white,
				title = {Begin draft \texttt{$\blacktriangleright$#1}},
				halign lower = right,
				]
				\textcolor{black}{\texttt{#2}}
				\tcblower
				\textcolor{white}{\texttt{End draft}}
			\end{tcolorbox}					
		\end{center} 
	}
	\newcommand{\draftbox}[1]{} 
\newtcolorbox{todobox}[3][]
{
	enhanced jigsaw,
	enforce breakable,
	hbox,
	left=1pt,
	right=1pt,
	top=1pt,
	bottom=1pt,
	colframe = #2!20,
	colback  = #2!100,
	coltitle = #2!20!black,  
	title    = {#3},
	#1,
}
\newtcolorbox{donebox}[3][]
{
	enhanced jigsaw,
	breakable,
	hbox,
	left=1pt,
	right=1pt,
	top=1pt,
	bottom=1pt,
	colframe = #2!50,
	borderline={0.5mm}{0mm}{black,dashed},
	colback  = #2!50,
	coltitle = #2!20!black,  
	title    = {#3},
	#1,
}
	\def \msboxcolor {Cerulean}
	\def \mstextinboxcolor {white}
	\def \ms{\ctext{\msboxcolor}{\mstextinboxcolor}{\mbox{[\msname]}}}
	\newcommandtwoopt{\mstodo}[2][to do][]{
		\begin{todobox}{\msboxcolor}{}
			\begin{varwidth}{\linewidth}
				\textcolor{\mstextinboxcolor}{\texttt{#2$\blacktriangleright$\ms: #1}}
			\end{varwidth}
		\end{todobox}
	}
	\newcommandtwoopt{\msdone}[2][to do][]{
		\begin{donebox}{black}{}
			\begin{varwidth}{\linewidth}
				\textcolor{white}{\texttt{#2$\blacktriangleright$\ms: #1}}
			\end{varwidth}
		\end{donebox}
	}
	\newcommand{\msdraft}[1]{
		\draftbox{\ms}{#1}
	}
	\def \msboxcolor {}
	\def \mstextinboxcolor {}
	\def \ms{}
	\newcommandtwoopt{\mstodo}[2][to do][]{}
	\newcommandtwoopt{\msdone}[2][to do][]{}
	\newcommand{\msdraft}[1]{}
\def \arturboxcolor {PineGreen}
\def \arturtextinboxcolor {white}
\def \artur{\ctext{\arturboxcolor}{\arturtextinboxcolor}{\mbox{[\arturname]}}}
\newcommandtwoopt{\arturtodo}[2][to do][]{
	\begin{todobox}{\arturboxcolor}{}
		\begin{varwidth}{\linewidth}
			\textcolor{\arturtextinboxcolor}{\texttt{#2$\blacktriangleright$\artur: #1}}
		\end{varwidth}
	\end{todobox}
}
\newcommandtwoopt{\arturdone}[2][to do][]{
	\begin{donebox}{black}{}
		\begin{varwidth}{\linewidth}
			\textcolor{white}{\texttt{#2$\blacktriangleright$\artur: #1}}
		\end{varwidth}
	\end{donebox}
}
\newcommand{\arturdraft}[1]{
	\draftbox{\artur}{#1}
}
\def \arturboxcolor {}
\def \arturtextinboxcolor {}
\def \artur{}
\newcommandtwoopt{\arturtodo}[2][to do][]{}
\newcommandtwoopt{\arturdone}[2][to do][]{}
\newcommand{\arturdraft}[1]{}	
\def \frankboxcolor {OrangeRed}
\def \franktextinboxcolor {white}
\def \frank{\ctext{\frankboxcolor}{\franktextinboxcolor}{\mbox{[\frankname]}}}
\newcommandtwoopt{\franktodo}[2][to do][]{
	\begin{todobox}{\frankboxcolor}{}
		\begin{varwidth}{\linewidth}
			\textcolor{\franktextinboxcolor}{\texttt{#2$\blacktriangleright$\frank: #1}}
		\end{varwidth}
	\end{todobox}
}
\newcommandtwoopt{\frankdone}[2][to do][]{
	\begin{donebox}{black}{}
		\begin{varwidth}{\linewidth}
			\textcolor{white}{\texttt{#2$\blacktriangleright$\frank: #1}}
		\end{varwidth}
	\end{donebox}
}
\newcommand{\frankdraft}[1]{
	\draftbox{\frank}{#1}
}
\def \frankboxcolor {}
\def \franktextinboxcolor {}
\def \frank{}
\newcommandtwoopt{\franktodo}[2][to do][]{}
\newcommandtwoopt{\frankdone}[2][to do][]{}
\newcommand{\frankdraft}[1]{}
\def \sophiaboxcolor {Purple}
\def \sophiatextinboxcolor {white}
\def \sophia{\ctext{\sophiaboxcolor}{\sophiatextinboxcolor}{\mbox{[\sophianame]}}}
\newcommandtwoopt{\sophiatodo}[2][to do][]{
	\begin{todobox}{\sophiaboxcolor}{}
		\begin{varwidth}{\linewidth}
			\textcolor{\sophiatextinboxcolor}{\texttt{#2$\blacktriangleright$\sophia: #1}}
		\end{varwidth}
	\end{todobox}
}
\newcommandtwoopt{\sophiadone}[2][to do][]{
	\begin{donebox}{black}{}
		\begin{varwidth}{\linewidth}
			\textcolor{white}{\texttt{#2$\blacktriangleright$\sophia: #1}}
		\end{varwidth}
	\end{donebox}
}
\newcommand{\sophiadraft}[1]{
	\draftbox{\sophia}{#1}
}
\def \sophiaboxcolor {}
\def \sophiatextinboxcolor {}
\def \sophia{}
\newcommandtwoopt{\sophiatodo}[2][to do][]{}
\newcommandtwoopt{\sophiadone}[2][to do][]{}
\newcommand{\sophiadraft}[1]{}
\def \allboxcolor {yellow}
\def \alltextinboxcolor {black}
\def \all{\ctext{\allboxcolor}{\alltextinboxcolor}{\mbox{[\allname]}}}
\newcommandtwoopt{\alltodo}[2][to do][]{
	\begin{todobox}{\allboxcolor}{}
		\begin{varwidth}{\linewidth}
			\textcolor{\alltextinboxcolor}{\texttt{#2$\blacktriangleright$\all: #1}}
		\end{varwidth}
	\end{todobox}
}
\newcommandtwoopt{\alldone}[2][to do][]{
	\begin{donebox}{black}{}
		\begin{varwidth}{\linewidth}
			\textcolor{white}{\texttt{#2$\blacktriangleright$\all: #1}}
		\end{varwidth}
	\end{donebox}
}
\newcommand{\alldraft}[1]{
	\draftbox{\ab}{#1}
}
\newcommand\See[1]{\marginpar{\tiny#1$\downarrow$}}
\newcommand\Here[1]{\marginpar{\tiny#1$\uparrow$}}
\def \allboxcolor {}
\def \alltextinboxcolor {}
\def \all{}
\newcommandtwoopt{\alltodo}[2][to do][]{}
\newcommandtwoopt{\alldone}[2][to do][]{}
\newcommand{\alldraft}[1]{}
\newcommand\See[1]{}
\newcommand\Here[1]{}
\begin{document}
\title{A Multi-Agent Model for Opinion Evolution under Cognitive Biases }
%
%
\author{M\'{a}rio S. Alvim \inst{1}
\and
Artur Gaspar da Silva\inst{1}
\and
Sophia Knight\inst{2}
\and
Frank  Valencia\inst{4,5}
\thanks{M\'{a}rio S.\ Alvim and Artur Gaspar da Silva were partially supported by CNPq, CAPES, and FAPEMIG. Frank Valencia's contribution to this work is partially supported by the SGR project PROMUEVA (BPIN 2021000100160) supervised by Minciencias.}
}
\authorrunning{Alvim et al.}
%
\institute{
Department of Computer Science, UFMG, Brazil \and
Department of Computer Science, University of Minnesota Duluth, USA
\and
CNRS-LIX, \'{E}cole Polytechnique de Paris, France \and
Pontificia Universidad Javeriana Cali, Colombia
}
\maketitle             
%
\begin{abstract}

    We generalize the DeGroot model for opinion dynamics to better capture realistic social scenarios. We introduce a model where each agent has their own individual \emph{cognitive biases}. Society is represented as a directed graph whose edges indicate how much agents influence one another. Biases are represented as the functions in the square region $[-1,1]^2$ and categorized into four sub-regions  based on the potential reactions they may elicit in an agent during instances of \emph{opinion disagreement}. Under the assumption that each bias of every agent is a \emph{continuous} function within the region of receptive but resistant reactions ($\Rec$), we show that the society converges to a consensus if the graph is strongly connected. Under the same assumption, we also establish that the entire society converges to a unanimous opinion if and only if the \emph{source components} of the graph—namely, strongly connected components with no external influence—converge to that opinion. We illustrate that convergence is not guaranteed for strongly connected graphs when biases are either  discontinuous functions in  $\Rec$ or not  included in $\Rec$.  We showcase our model through a series of examples and simulations, offering insights into how opinions form in social networks under  cognitive biases.

    \keywords{Cognitive bias, Multi-Agent Systems, Social Networks}
\end{abstract}

\section{Introduction}
\label{sec:introduction}

In recent years, the significance and influence of social networks have experienced a remarkable surge, capturing widespread attention and shaping users' opinions in substantial ways. The  \emph{dynamics of opinion/belief formation} in social networks involves individuals expressing their opinions, being exposed to the opinions of others, and adapting or reinforcing their own views based on these interactions. Modeling these dynamics allows us to gain insights into how opinions form, spread, and evolve within social networks.

The DeGroot multi-agent model \cite{degroot} is one of most prominent formalisms for opinion formation dynamics in social networks. Society is represented as a directed graph whose edges indicate how much individuals (called \emph{agents}) influence one another. Each agent has an opinion  represented as a value in $[0,1]$ indicating the strength of their agreement with an underlying proposition (e.g., ``\emph{vaccines are safe}''). They repeatedly update their opinions  with the weighted average of their opinion differences (level of \emph{disagreement}) with those who influence them. The DeGroot model is valued for its tractability, derived from its connection with matrix powers and Markov chains, and it remains a significant focus of study providing a comprehensive understanding of opinion evolution\cite{survey}. 

 Nevertheless, the DeGroot model  has an important caveat: It assumes \emph{homogeneity} and \emph{linearity} of opinion update. In social scenarios, however, two agents may update their opinions differently depending on their individual \emph{cognitive biases} on disagreement—i.e., how they interpret and react towards the level of disagreements with others. This results in more complex updates that may involve non-linear even non-monotonic functions. For example, an individual under \emph{confirmation (cognitive) bias} \cite{Aronson10} may ignore the opinion of those whose level of disagreement with them is over a certain threshold. In fact, much of the unpredictability in opinion formation is due to users' \emph{biases} in their belief updates, where users  sometimes tend to reinforce their original beliefs, instead of questioning and updating their opinions upon encountering new information. Indeed, rather than perfect rational agents, users are often subject to cognitive biases.

In an earlier FORTE paper \cite{alvim2021multi},
we introduced a DeGroot-like model with a \emph{non-linear} update mechanism tailored for a specific type of confirmation bias. The model was shown to be tractable and it provides insights into the effect of this cognitive bias in opinion dynamics. Nevertheless, it also assumes homogeneity of opinion update, and choosing a particular function to represent the bias, although natural, may seem somewhat ad-hoc.

To address the above-mentioned caveat, in this paper we introduce a generalization of the DeGroot model that allows for \emph{heterogeneous} and \emph{non-linear} opinion updates. Each agent has their own individual cognitive biases on levels of disagreement. These biases are represented as arbitrary functions in the square region $[-1,1]^2$. The model then unifies disparate belief update styles with bias into a single framework which takes \emph{disagreement} between agents as the central parameter. Indeed, standard cognitive biases of great importance in social networks such as backfire effect \cite{Nyhan:2010}, authority bias \cite{Psychology2}, and confirmation bias \cite{Aronson10}, among others, can be represented in the framework.

We classify the biases in $[-1,1]^2$ into four sub-regions ($\Mal,\Rec,\BF,\Ins$) based on the cognitive reactions they may cause in an agent during instances of \emph{opinion disagreement}. 
For example, agents that are malleable, easily swayed, exhibit fanaticism or prompt to follow authoritative figures can be modelled with biases in $\Mal$. Agents that are receptive to other opinions, but unlike malleable ones, can exhibit some skepticism to fully accepting them can be modelled with biases in the region $\Rec$. Individuals that become more extreme when confronted with opposing opinions can be modelled by biases in $\BF$. Finally insular agents can be modelled with the bias in $\Ins$.

\emph{Consensus} is a central property for opinion formation dynamics. Indeed the inability to converge to consensus is often a sign of a polarized society. In this paper we use the above-mentioned region classification to provide the following insightful theoretical results for consensus.

\begin{itemize}
 \item Assuming that each bias of every agent is a \emph{continuous} function in $\Rec$, the society converges to a consensus if that society is strongly connected. This implies that  a strongly connected society can converge to a consensus if its members are receptive but resistant to the opinions of others.
 \item Under the same assumption, we also establish that the entire society converges to a unanimous opinion if and only if the \emph{source components} of the graph, i.e.,  strongly connected components with no external influence, converge to that opinion. This implies that upon agreeing on an opinion, closed and potentially influential groups, can make all individuals converge to that opinion in a society whose members are receptive but resistant. 
 \item We show that convergence is not guaranteed for strongly connected graphs when biases are either  discontinuous functions in  $\Rec$ or not  included in $\Rec$.  
 \end{itemize}

 We also demonstrate our model with examples and computer simulations that provide  insights into opinion formation under cognitive biases. The open code for these simulations 
 can be found at \url{https://github.com/bolaabcd/polarization2}.

\section{An Opinion Model with Cognitive Biases}
\label{sec:model}

The DeGroot model \cite{degroot} is a well-known model  for social learning. In this formalism each individual (\emph{agent}) repeatedly updates their current opinion by averaging the opinion values of those who influence them. But one of its limitation is that the model does not provide a mechanism for capturing the \emph{cognitive biases} under which each individual may interpret and react to the opinion of others.

In this section we introduce a generalization  of the DeGroot model with a mechanism to express arbitrary cognitive bias based on opinion  disagreement. 

\subsection{Influence Graph.} In social learning models, a \emph{community/society} is typically represented as a directed weighted graph with edges between individuals (agents) representing the direction and strength of the influence that one carries over the other. This graph is referred to as the \emph{Influence Graph}. 

\begin{definition}[Influence Graph]\label{d1}\label{d5}
    An ($n$-agent) \emph{influence graph} is a directed weighted graph $G=(A,E,I)$ with $A=\{1,\ldots,n\}$ the vertices, $E\subseteq A\times A$  the edges, and $I:A\times A \to [0,1]$ a weight function s.t. $I(i,j)=0$ iff $(i,j)\notin E$.
\end{definition}

 The vertices in $A$ represent $n$ agents of a given community or network. The set of edges $E\subseteq A\times A$ represents the (direct) influence relation between these agents; i.e.,  $(i,j)\in E$ means that agent $i$ influences agent $j$. The value $I(i,j)$, for simplicity written $\I{i}{j}$, denotes the strength of the influence: $0$ means no influence and a higher value means stronger influence.  We use $A_i$ to denote the set $\{ i \ | \ (j,i) \in E \}$ of agents that have a direct influence over agent $i$. 

\begin{remark}
In contrast to \cite{alvim2021multi}, we do not require agents to have nonzero self-influence. Furthermore, since we do not require the sum of influences over a given agent to be 1 (unlike \cite{degroot}), we  will use the following notation for \emph{proportional influence} of $j$ over $i$: $\overline{\I{j}{i}} = \frac{\I{j}{i}}{\sum_{k \in A_i} I_{k,i}}$
 if $(j,i)\in E$, else $\overline{\I{j}{i}}=0$.

\end{remark}

\subsection{General Opinion Update.}\label{subsec:biasedBeliefEvolution}

Similar to the DeGroot-like models in \cite{survey}, we model the evolution of  agents' opinions about some  underlying \emph{statement} or \emph{proposition}. For example, such a proposition could be ``\emph{vaccines are unsafe},'' ``\emph{human activity has little impact on climate change},'', ``\emph{AI poses a threat to humanity}'',  or   ``\emph{Reviewer 2 is wonderful}''. 

The \emph{state of opinion} (or \emph{belief state}) of  all the agents is represented as a vector in $[0,1]^{|A|}$. If $B$ is a state of opinion,  $\B{i} \in [0,1]$ denotes the \emph{opinion} (\emph{belief}, or \emph{agreement}) value of agent $i \in A$ regarding the underlying proposition. If $\B{i}=0$, agent $i$ completely disagrees with the underlying proposition; if $\B{i}=1$, agent $i$ completely agrees with the underlying proposition. Furthermore, the higher the value of $\B{i}$, the stronger the agreement with such a proposition.

At each time unit $t\in\N$, every agent $i \in A$ updates their opinion. We shall use $B^t$ to denote the state of opinion at time $t\in\N.$ We can now define a general  DeGroot-like opinion model as follows.  

\begin{definition}[Opinion Model]
 An \emph{Opinion Model} is a tuple $(G,B^0,\mu_G)$ where $G$ is an $n$-agent influence graph, $B^0$ is the initial state of opinion, and  
 $\mu_G: [0,1]^{n} \rightarrow [0,1]^{n}$ is a state-transition function, called \emph{update function}. For every $t\in\N$, the state of opinion at time $t+1$ is given by $B^{t+1}=\mu_G(B^t)$. 
\end{definition}

The update functions can be used to express any deterministic and discrete transition from one opinion state to the next, possibly taking into account the influence graph. This work singles out and characterizes a meaningful family of update functions extending the basic DeGroot model with cognitive biases that are based on opinion \emph{disagreement}. Intuitively, these update functions specify the reaction of an agent to the opinion disagreements with each of  their influencers. To build some intuition, we first recall the update function of the {DeGroot model}.

Below we omit the index from the update function $\mu_G$ if no confusion arises.

\subsection{DeGroot Update}

The standard DeGroot model \cite{degroot} is obtained by the following update function:
\begin{equation}\label{degroot-upd1:eq}
\mu(B)_i = \sum_{j \in A_i}\overline{\I{j}{i}} \B{j}  
\end{equation}
for every $i\in A$. Thus, in the DeGroot model each agent updates their opinion by taking the  weighted average of the opinions of those who influence them. We can rewrite Eq.\ref{degroot-upd1:eq} as follows:  
\begin{equation}\label{degroot-upd2:eq}
\mu(B)_i =  \B{i} + \sum_{j \in A_i}\overline{\I{j}{i}} (\B{j} - \B{i}). 
\end{equation}

Notice that DeGroot update is \emph{linear} in the agents' opinions and can be expressed in terms of \emph{disagreement}: The opinion of every agent $i$ is updated taking into account the weighted average of their \emph{opinion disagreement} or  \emph{opinion difference} with those who influence them. 

Intuitively, if $j$ influences $i$, then $i$'s opinion would tend to move closer to $j$'s. The \emph{disagreement term} $(\B{j} - \B{i}) \in [-1,1]$ in  Eq.\ref{degroot-upd2:eq} realizes this intuition. If  $(\B{j} - \B{i})$ is a negative term in the sum, the disagreement can be thought of as contributing with a magnitude of $|\B{j} - \B{i}|$ (multiplied by $\overline{\I{j}{i}}$) to \emph{decreasing} $i$'s belief in the underlying proposition. Similarly, if $(\B{j} - \B{i})$ is positive, the disagreement contributes with the same magnitude but to \emph{increasing $i$'s} belief. 

\subsection{Disagreement-Bias Update}
Now we generalize DeGroot updates by defining a class of update functions that also allows for \emph{non-linear} updates, and for each agent to react differently to opinion disagreement with distinct agents. We capture this reaction by means of bias functions $\bias{i,j}:[-1,1]\to[-1,1]$, where $(j,i) \in E$, on opinion disagreement stating how the bias of $i$ towards the opinion of $j$, $\bias{i,j}$,  affects $i$'s new opinion. 

In the following definition we use the clamp function for the interval $[0,1]$ which is defined as   $[r]_0^1 = \min(\max(r,0),1)$ for any $r\in\R$.

\begin{definition}[Bias Update] Let  $(G,B^0,\mu_G)$ be an opinion model with $G=(A,E,I)$. The function $\mu_G$ is a \emph{(disagreement) bias update} if for every $i\in A$,  
\begin{equation}\label{bias-upd:eq}
\mu_G(B)_i = \left[ \B{i} + \sum_{j \in A_i}\overline{\I{j}{i}}\bias{i,j}(\B{j} - \B{i}) \right]^1_0
\end{equation}
where each $\bias{i,j}$ with $(j,i) \in E$, called \emph{the (disagreement) bias} from $i$ towards $j$, is an endo-function\footnote{The biases we wish to capture can be seen as distortions of disagreements,  themselves disagreements. It seems then natural to choose $[-1,1]$ as the domain and co-domain of the bias function.} on $[-1,1]$.  The model $(G,B^0,\mu_G)$ is a \emph{(disagreement) bias} opinion model if $\mu_G$ is a disagreement bias update function. 
\end{definition}

The clamp function $[\cdot]_0^1$ guarantees that the right-hand side of Eq.\ref{bias-upd:eq} yields a valid belief value (a value in $[0,1]$).  Intuitively, the function $\bias{i,j}$ represents the direction and magnitude of how  agent $i$ reacts to their disagreement $\B{j} - \B{i}$ with agent $j$. If  $\bias{i,j}(\B{j} - \B{i})$ is a negative term in the sum of Eq.\ref{bias-upd:eq}, then the bias of agent $i$ towards $j$ contributes with a magnitude  of $|\bias{i,j}(\B{j} - \B{i})|$ (multiplied by $\overline{\I{j}{i}}$) to \emph{decreasing} $i'$s belief in the underlying proposition. Conversely,  if  $\bias{i,j}(\B{j} - \B{i})$ is positive, it contributes to \emph{increasing} $i'$s belief with the same magnitude.

Below we identify some particular examples of the cognitive biases that can be captured with disagreement-bias opinion models.

\begin{example}[Some Cognitive Biases]\label{basic-bias:example}  Clearly, the classical DeGroot update function Eq.\ref{degroot-upd2:eq} can be recovered from Eq.\ref{bias-upd:eq}  by letting every bias $\bias{i,j}$ be the identity on disagreement: i.e.,  
 $\bias{i,j}=\mathtt{degroot}$ where $\mathtt{degroot}(x)=x$.
 
 \emph{Confirmation Bias.} We now illustrate some form of \emph{confirmation bias}\cite{Aronson10} where agents are more \emph{receptive} to opinions that are closer to theirs. An example of confirmation bias can be obtained by letting $\bias{i,j}=\mathtt{conf}(x)=x(1+\delta-|x|)/(1+\delta)$ for a very small  non-negative constant $\delta$.\footnote{The confirmation bias function from \cite{alvim2021multi} uses $\delta=0$} In the following plots and simulations we fix $\delta=1 \times 10^{-4}$. This bias causes $i$ to pay less attention to the opinion of $j$ as their opinion distance $|x|=|\B{j} - \B{i}|$ tends to 1. 
 
 \emph{Backfire Effect.} Let us now consider another important  cognitive bias called {backfire effect}\cite{Nyhan:2010}. Under this effect an agent strengthens their position of disagreement with another agent if their opinions are significantly distant. A form of backfire effect can be obtained by letting $\bias{i,j}=\mathtt{backf}$ where $\mathtt{backf}(x) = -x^3$. Notice that unlike the DeGroot update, this bias contributes to changing $i$'s opinion with a \emph{magnitude} of $|\mathtt{backf}(\B{j} - \B{i})|$ (multiplied by $\overline{\I{j}{i}}$) \emph{but in the opposite direction} of the opinion of $j.$ This  potentially makes the new opinion of agent $i$  \emph{more} distant from that of $j$. 

\emph{Authority Bias.} Another common cognitive bias in social networks is the {authority bias} \cite{Psychology2} under which individuals tend to blindly follow  authoritative or influential figures often to the extreme. Let $\bias{i,j}=\mathtt{fan}$ be the \emph{sign} function, i.e.,  $\mathtt{fan}(x)=x/|x| \mbox{ if } x\neq 0, \mbox{ otherwise } \mathtt{fan}(x)=0$. This bias illustrates a case of die-hard \emph{fanaticism} of $i$ towards $j$. Intuitively, when confronted with any disagreement $x=\B{j}-\B{i}\neq 0$,  this bias contributes to changing $i$'s opinion with the \emph{highest magnitude}, i.e., $|\bias{i,j}(x)|=1$, in the \emph{direction} of the opinion of $j$.

 Finally we illustrate a bias that, unlike the previous, causes agents to ignore opinions of others. We call it the \emph{insular} bias $\bias{i,j}=\mathtt{ins}$ and it is defined as the zero function $\mathtt{ins}: x\mapsto 0.$ \qed
\end{example}

\begin{figure}[ht]
    \centering
    \resizebox{0.5\textwidth}{!}{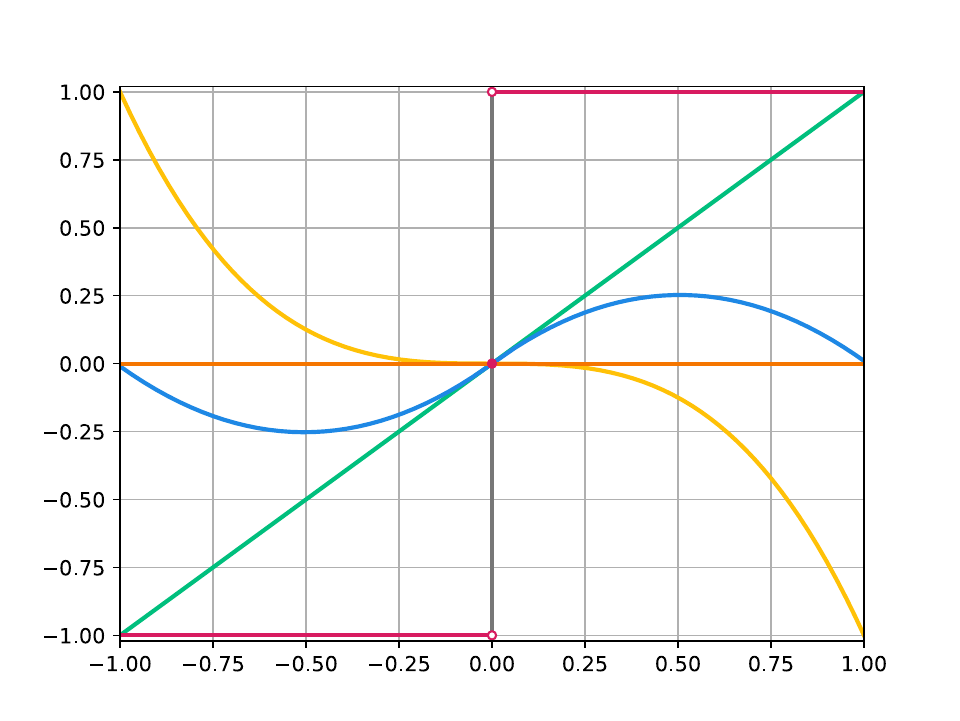}
    \caption{Bias functions from Ex.\ref{basic-bias:example} in the region $[-1,1]^2$:  $\mathtt{degroot}$ (in green),
$\mathtt{conf}$ (in blue), $\mathtt{backf}$ (in yellow), $\mathtt{fan}$ (in red), $\mathtt{ins}$ (in orange).
    \label{basic-bias:fig}}
\end{figure}

The particular bias function examples of Ex.\ref{basic-bias:example} are depicted in the \emph{square region} $[-1,1]^2$ in 
Fig.\ref{basic-bias:fig}. The functions may seem somewhat ad hoc but  in Section \ref{sec:biasAndConsensus} we identify a broad \emph{family} of \emph{bias functions} in the  region $[-1,1]^2$ that guarantees a property of central interest in multi-agent opinion evolution; namely, whether all the agents will converge to the same opinion, i.e. \emph{convergence to consensus}.

\begin{remark}
    We conclude this section by noting that unlike the DeGroot model, in Eq.\ref{bias-upd:eq} we allow agents to react with a distinct bias function to each of their influencers. This broadens the range of captured opinion dynamics and we illustrate this in the next section with an example exhibiting agents with different bias functions including those in Ex.\ref{basic-bias:example}. This, however, comes at a price; the update function can be non-linear in the agents' opinions (see e.g., functions $\mathtt{backf}$ and $\mathtt{conf}$). Thus, the analysis of opinion convergence using Markov chain theory for linear-system evolution as done for the DeGroot model is no longer applicable. In Section \ref{sec:biasAndConsensus} we study opinion convergence using  methods from real analysis.  
\end{remark}

\subsection{Vaccine Example}\label{subsec:examples}

Let us suppose that the proposition of interest is \emph{``vaccines are unsafe''}  
and  $G=(A,E,I)$ is as in Fig.\ref{fig:vaccineGraph}. Suppose that initally the agents $1,2,3$ are \emph{anti-vaxers} with opinion values $1.0,0.9,0.8$ about the proposition. In contrast, agents $4,5,6$ are initially \emph{pro-vaxers}, with opinion values $0.2,0.1,0.0$ about the proposition. Thus, the initial state of opinion is $B^0=(1.0,0.9,0.8,0.2,0.1,0.0)$. 

Notice that although agent $1$ is the most extreme anti-vaxer,  agent $6$, the most extreme vaxer, has the highest possible influence over them. As we shall illustrate below, depending on the bias of $1$ towards $6$, this may have a strong impact on the evolution of the opinion of agent $1$.

\begin{figure}
\centering
\resizebox{0.6\columnwidth}{!}{
    \begin{tikzpicture}
        \tikzstyle{every state}=[fill opacity=0.5,text opacity=1,thick,minimum size=12pt]
        
        \node[state, fill=blue1] (0) at (0,0) [] {1};
        \node[state, fill=blue2] (1) at (2,1) [] {2};
        \node[state, fill=blue3] (2) at (2,-1) [] {3};
        \node[state, fill=red1] (3) at (4,1) [] {4};
        \node[state, fill=red2] (4) at (4,-1) [] {5};
        \node[state, fill=red3] (5) at (6,0) [] {6};

        \draw
        (0) edge[<->] node{0.6} (1)
        (1) edge[->]  node{0.4} (3)
        (3) edge[->]  node{0.4} (5)
        (0) edge[->] node{0.4} (2)
        (2) edge[->]  node{0.6} (4)
        (4) edge[->]  node{0.6} (5)
        (3) edge[<->] node{0.2} (2)
        (5) edge[->, looseness=1.2, bend right=100]  node{1.0} (0)
        (0) edge[loop left] node{1.0} (0)
        (1) edge[loop above] node{1.0} (1)
        (2) edge[loop below] node{1.0} (2)
        (3) edge[loop above] node{1.0} (3)
        (4) edge[loop below] node{1.0} (4)
        (5) edge[loop right] node{1.0} (5)
        ;
    \end{tikzpicture}
    }
    \caption{Influence graph for vaccine example.}

    \label{fig:vaccineGraph}
\end{figure}
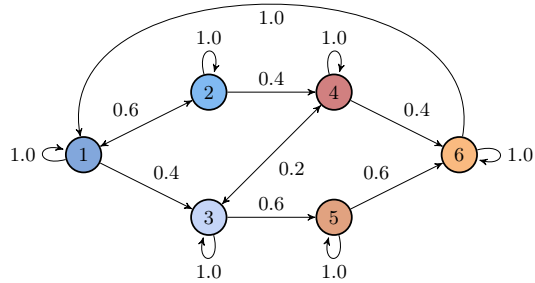

We now consider the evolution of their opinion under different update functions obtained by combining biases from Ex.\ref{basic-bias:example}.   In Fig.\ref{fig:vaccineMany} we show the evolution of opinions of vaxxers and anti-vaxxers using combinations of the bias functions from Fig.\ref{basic-bias:fig}.   Consider Fig.\ref{fig:vaccineCmany}. Agent $2$ reaches the extreme opinion value $1.0$ rather quickly because of their die-hard fanaticism towards the opinion of $1$ (i.e., $\bias{2,1}=\mathtt{fan}$). As the influence of $6$ on agent $1$ backfires ($\bias{1,6}=\mathtt{back}$), agent $1$ stays with belief value $1.0$. Eventually, all the other biases contribute to changing the belief value of the influenced agents towards $1.0$. Indeed, the agents converge to a consensus that vaccines are unsafe.

In Fig.\ref{fig:vaccineSDmany}, the influence of $3$ on agent $5$ backfires, since  $\bias{5,3}=\mathtt{backf}$. This makes their disagreement increase, moving agent $5$'s opinion closer to $0$. On the other hand, the opinion of agent $6$ is influenced at the same time by the belief values of $5$ and $4$ as in the DeGroot model $(\bias{6,5}=\bias{6,4}=\mathtt{degroot})$ so her opinion stays between theirs. 

Notice that in Fig.\ref{fig:vaccineFFmany} agent $5$ reacts to $3$ with die-hard fanaticism ($\bias{5,3}=\mathtt{fan}$) while $3$'s belief value does not converge to $0.0$ or $1.0$. Thus we obtain the looping behaviour of agent $5$. The fanaticism of agent $5$ propagates also to agent $6$ since he is influenced by agent $5$ by $\mathtt{degroot}$ bias. 

Finally, notice the behaviours in Fig.\ref{fig:vaccineMany} when all the agents have the same bias. In particular, Fig.\ref{fig:vaccineCB} suggests convergence to consensus when 
all the agents are under confirmation bias. In fact convergence to consensus is indeed guaranteed for this example as we shall later see in this paper.

The above illustrates that different types of biases can have strong impact on opinion evolution for a given influence graph. In the next section, we will identify meaningful families of bias as functions in the region $[-1,1]^2$.

\begin{figure}[ht]
\centering
\begin{subfigure}{0.3\linewidth}
    \resizebox{\textwidth}{!}{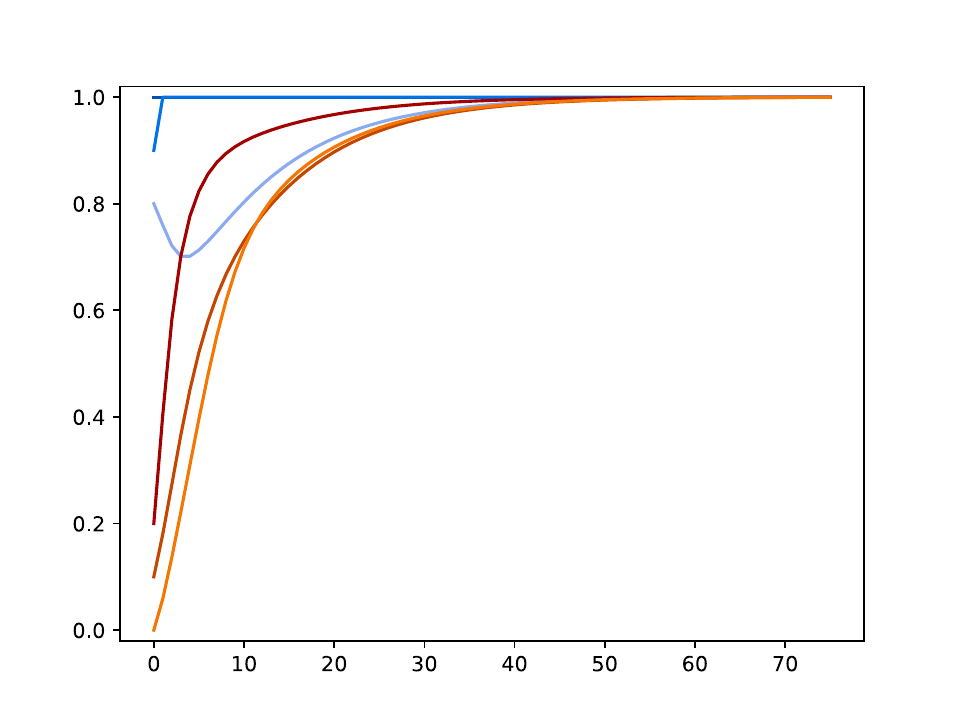}
    \caption{$\bias{2,1}=\mathtt{fan}$, $\bias{1,6}=\mathtt{backf}$, $\bias{4,2}=\bias{1,2}=\mathtt{degroot}$, otherwise $\bias{i,j}=\mathtt{conf}$.}
    \label{fig:vaccineCmany}
\end{subfigure}
\hfill 
\begin{subfigure}{0.3\linewidth}
    \resizebox{\textwidth}{!}{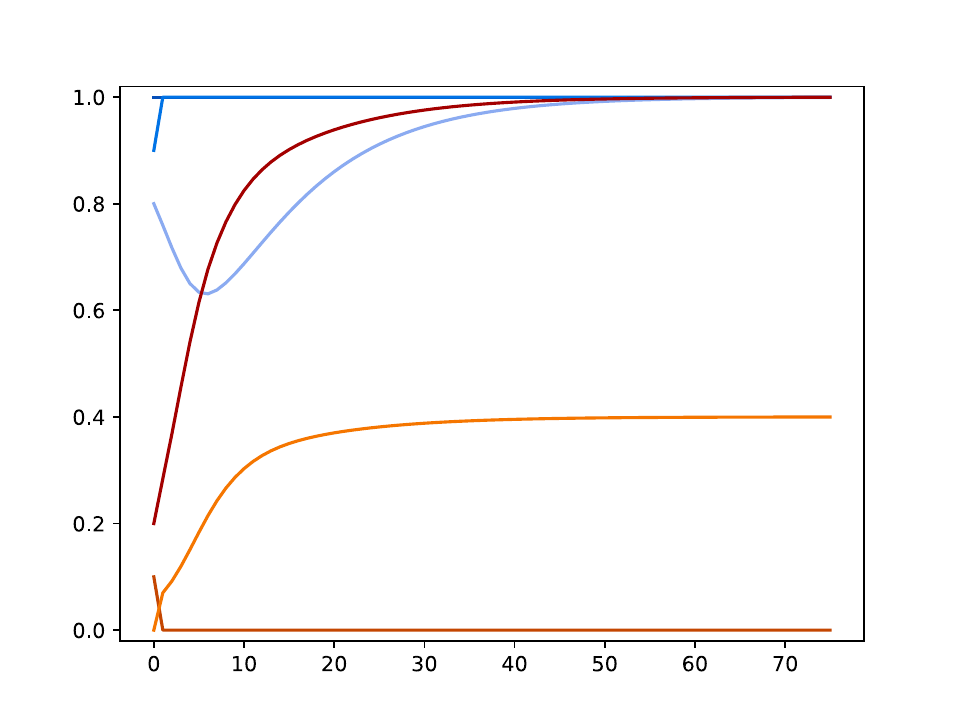}
    \caption{$\bias{2,1} = \mathtt{fan}$, $\bias{1,6}=\bias{5,3}=\mathtt{backf}$, $\bias{6,5}=\bias{6,4}=\mathtt{degroot}$, otherwise $\bias{i,j}=\mathtt{conf}$.}
    \label{fig:vaccineSDmany}
\end{subfigure}
\hfill 
\begin{subfigure}{0.3\linewidth}
    \resizebox{\textwidth}{!}{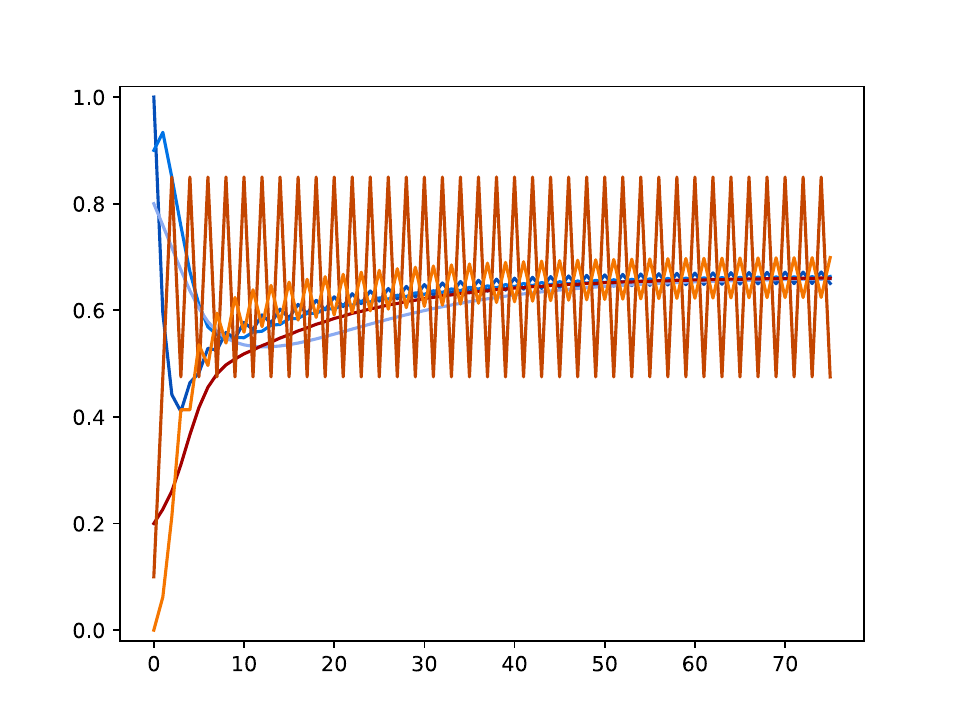}
    \caption{
     $\bias{5,3}=\mathtt{fan}$,  $\bias{4,3}=\mathtt{backf}$, $\bias{6,5}=\bias{1,6}=\mathtt{degroot}$, otherwise $\bias{i,j}= \mathtt{conf}$.}
     \label{fig:vaccineFFmany}
\end{subfigure}
\hfill 
\begin{subfigure}{0.3\linewidth}
    \resizebox{0.9\textwidth}{!}{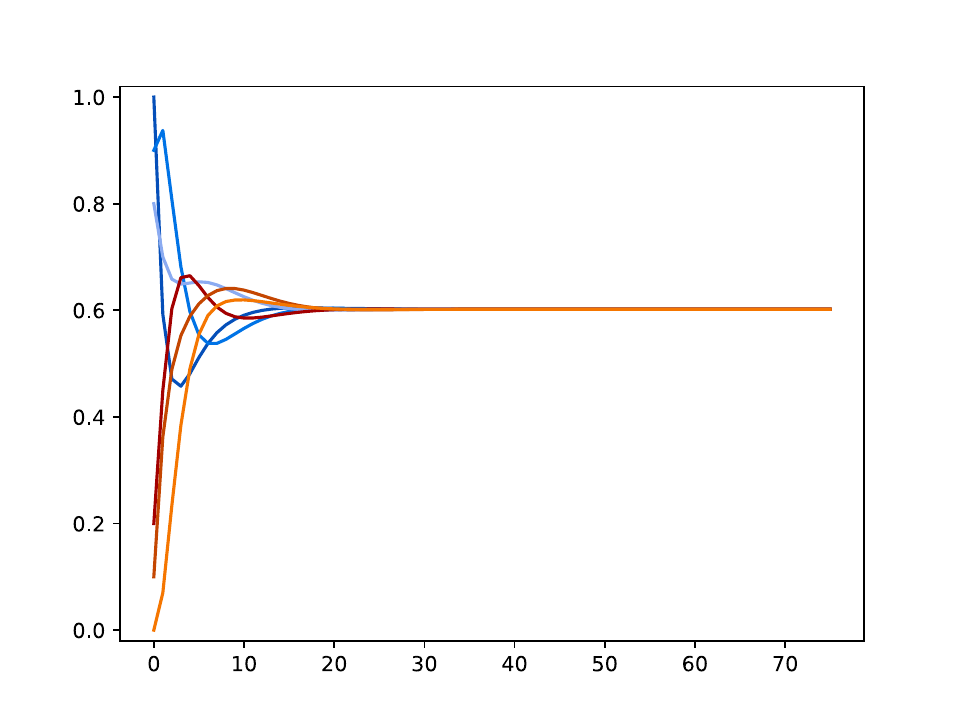}
    \caption{Each $\bias{i,j} = \mathtt{degroot}$.}
    \label{fig:vaccineDG}
\end{subfigure}
\hfill 
\begin{subfigure}{0.3\linewidth}
    \resizebox{0.9\textwidth}{!}{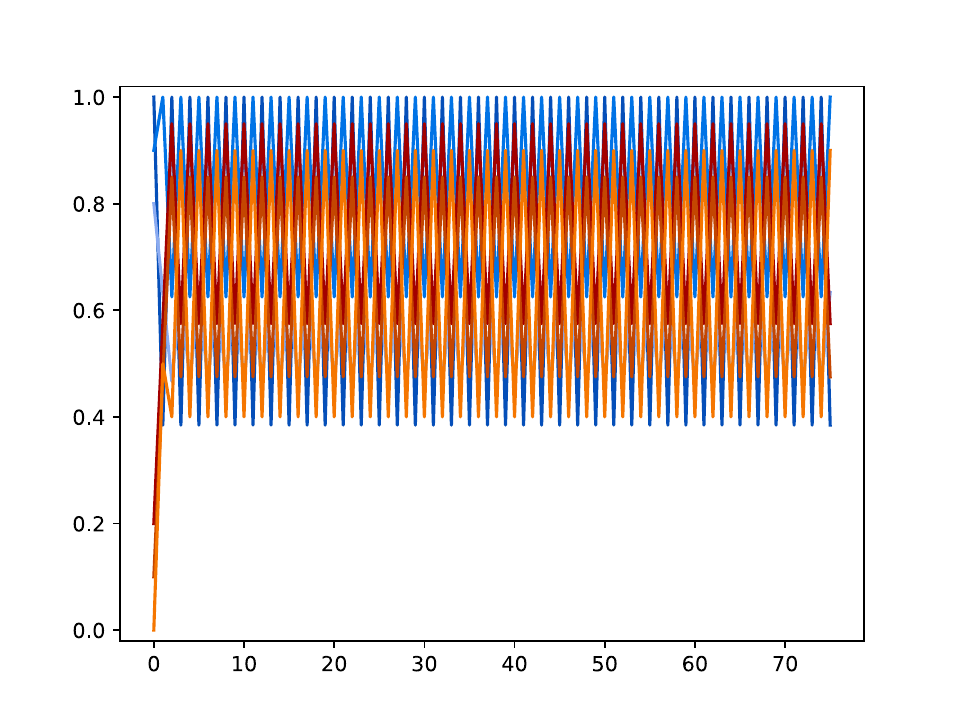}
    \caption{Each $\bias{i,j} = \mathtt{fan}$.}
    \label{fig:vaccineF}
\end{subfigure}
\hfill 
\begin{subfigure}{0.3\linewidth}
    \resizebox{0.9\textwidth}{!}{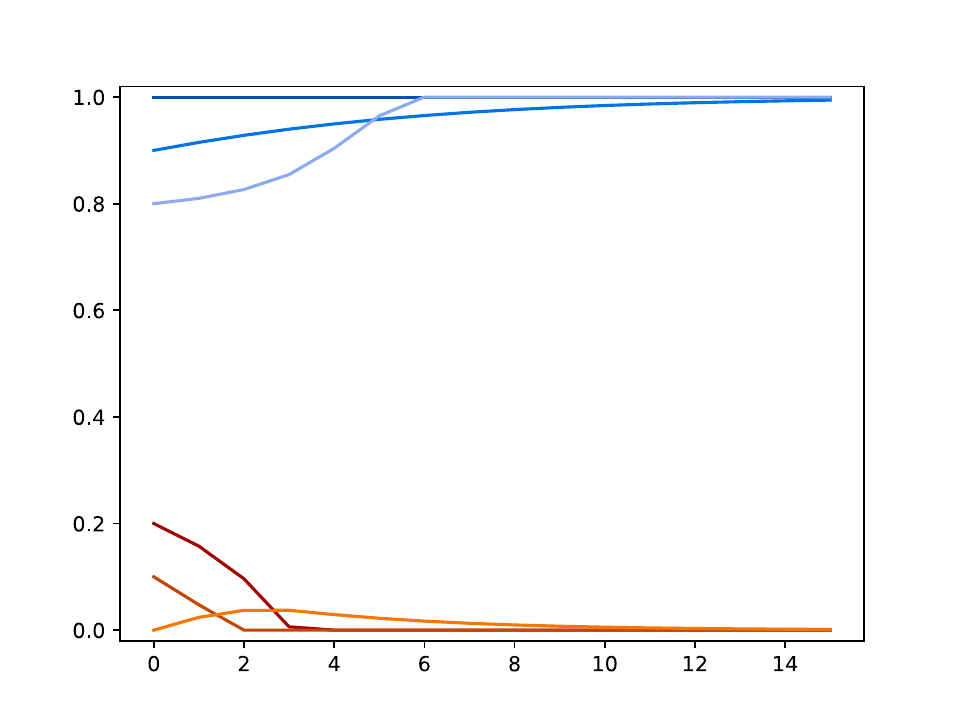}
    \caption{Each $\bias{i,j} = \mathtt{backf}$.}
    \label{fig:vaccineBF}
\end{subfigure}
\hfill 
\begin{subfigure}{0.3\linewidth}
    \resizebox{0.9\textwidth}{!}{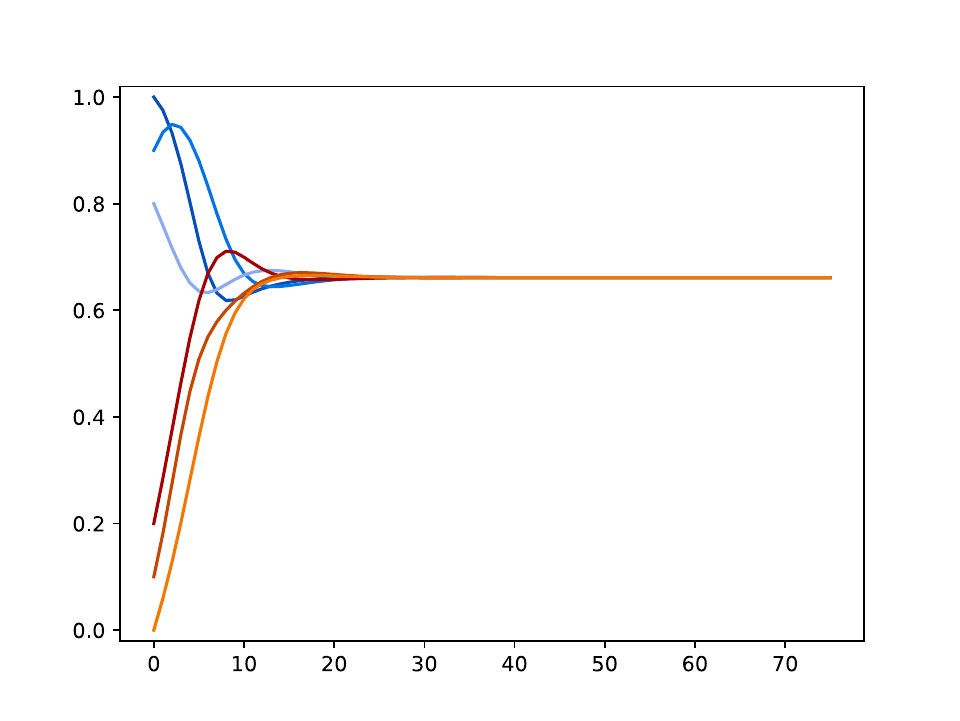}
    \caption{Each $\bias{i,j} = \mathtt{conf}$.}
    \label{fig:vaccineCB}
\end{subfigure}
\caption{Simulations for $G$ in Fig.\ref{fig:vaccineGraph} with  $B^0 = (1.0,0.9,0.8,0.2,0.1,0.0)$ using different biases. Each plot represents the evolution in time of the opinion of the agent in Fig.\ref{fig:vaccineGraph} with the same color.}\label{fig:vaccineMany}
\end{figure}

\section{Bias Region and Consensus}
\label{sec:biasAndConsensus}




%
%

%
%

Consensus is a property of central interest in social learning models. Indeed, failure to converge to a consensus is often an indicator of  polarization in a society. 

\begin{definition}[Consensus]   Let  $(G,B^0,\mu_G)$ be an opinion model with $G=(A,E,I)$.  We say that the subset of agents $A'\subseteq A$ \emph{converges to an opinion value $v\in[0,1]$} iff  for every $i \in A'$, $\lim_{t\to\infty}B^t_i = v.$ We say $A'\subseteq A$ \emph{converges to consensus} iff $A'$ converges to an opinion value $v$ for some $v$. 
 \end{definition}  

In this section we identify a broad and meaningful region of $[-1,1]^2$ where all the \emph{continuous} disagreement bias functions guarantee that agents converge to consensus under certain topological conditions on the influence graph.

\subsection{Bias Regions}\label{subsec:function_regions}
In what follows we say that a bias $\bias{i,j}$ \emph{is in a region} $R \subseteq [-1,1]^2$ if its function graph is included in $R$, i.e., if $\{ (x,\bias{i,j}(x)) \ | \ x \in [-1,1]\}\subseteq R$. We now identify regions of $[-1,1]^2$ that capture several notions of cognitive bias.


\begin{definition}[Bias Regions]\label{def:biasregions} Let $\SR$ be the square region $[-1,1]^2.$  Let the (sub)regions $\Mal,\Rec,\BF,\Ins \subseteq \SR$, named \emph{Malleability}, \emph{Receptive-Resistant}, \emph{Backfire} and \emph{Insular}, be  defined as follows:
\begin{align*} 
 \Mal =&  \{ (x,y)\in \SR \ |  
        (x< 0 \mbox{ and } y\leq x) \mbox{ or } (x> 0 \mbox{ and } y\geq x) \mbox{ or } x=0\}  \\
 \Rec =& \{ (x,y)\in \SR \ | \ (x< 0 \mbox{ and } x<y<0) \mbox{ or } (x>0 \mbox{ and } 0<y<x) \mbox{ or } x= y=0\}  \\ 
 \BF =& \{ (x,y) \in \SR \ | \ (x<0 \mbox{ and }  0<y) \mbox{ or } (x>0 \mbox{ and }   y<0)  \mbox{ or } x=y=0 \}  \\
 \Ins  =& \{ (x,y) \in \SR \ | \  y=0  \}. 
\end{align*}
\end{definition}

\begin{figure}[ht]
    \centering
    \resizebox{0.5\textwidth}{!}{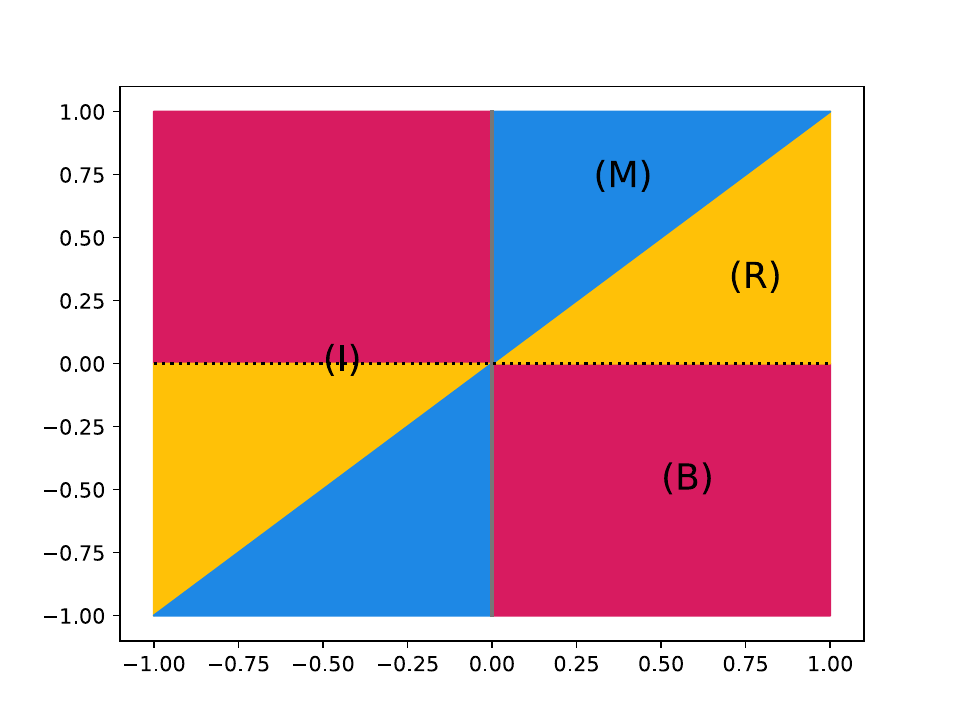}
    \caption{Bias Regions: Malleability ($\Mal$, in blue), Receptive-Resistant ($\Rec$, in yellow), Backfire  ($\BF$, in red), Insular ($\Ins$, the dotted line $y=0$).
    } 
    \label{fig:regions}
\end{figure}
The regions are  depicted in Fig.\ref{fig:regions}. Notice that if a point $(x,y)$ of a bias $\beta_{i,j}$ is in the  \emph{Malleability} region $\Mal$ (i.e., $y=\beta_{i,j}(x)$ and $(x,y)\in \Mal$) it means that for a disagreement $x=\Dif{j}{i}$
between $j$ and $i$, the bias will contribute with a magnitude $|y|\geq|x|$ (multiplied by $\overline{I_{j,i}}$) to changing the opinion of $i$ in the direction of $j$'s opinion. Since $|y|\geq|x|$,  
depending on the value of $\overline{I_{j,i}}$, the opinion of $i$ may move to match $j$'s opinion or even further (which can make $i$'s new opinion even more extreme than that of $j$). 
Individuals that blindly follow  authoritative or influential figures, easily swayed agents, fanaticism, among others, can be modelled by bias functions in this region. Indeed the function $\mathtt{fan}$ from Ex.\ref{basic-bias:example} is in $\Mal$ (see Fig.\ref{basic-bias:fig}). The identity bias function ${\mathtt{degroot}}$ is also in $\Mal.$

Like in the case above, if a point $(x,y)$ of a bias $\beta_{i,j}$ is in the \emph{Receptive-Resistant} region $\Rec$, it also means that for a disagreement $x=\Dif{j}{i} \neq 0$ between $j$ and $i$, the bias contributes to changing the opinion of $i$ in the direction of $j$'s opinion. Nevertheless, the magnitude of contribution is not as high as the previous case, namely it is $|y|$ with $|x|>|y|>0$. Individuals that are receptive to other opinions but, unlike malleable ones, may demonstrate some resistance, reluctance, or skepticism to fully accept them, can be modelled in this region. The confirmation bias function $\mathtt{conf}(x)=x(1+\delta-|x|)/(1+\delta)$ from Ex.\ref{basic-bias:example}, where $\delta>0$ is a very small constant, is in $\Rec$ (see Fig.\ref{basic-bias:fig}). 

In fact, it is worth noticing that for any constant $\delta>0$, the resulting bias function $\beta_{i,j}(x)=x(1+\delta-|x|)/(1+\delta)$ is in $\Rec$. In the limit, however, we have
$\lim_{\delta\to\infty}x(1+\delta-|x|)/(1+\delta)=x=
{\mathtt{degroot}(x)}$ which is not in $\Rec$ but in $\Mal$. Therefore,  $\delta$ could be viewed as a \emph{parameter of receptiveness}; the higher the value of $\delta$, the more receptive and less resistant agent $i$ is toward $j$'s opinion. In the limit,  agent $i$ is not resistant and behaves as a malleable agent towards $j$.

Contrary to the previous two cases, if a point $(x,y)$ of a bias $\beta_{i,j}$ is in the \emph{Back-Fire} region $\BF$, it means that for a disagreement $x=\Dif{j}{i} \neq 0$ between $j$ and $i$, the bias contributes to changing the opinion of $i$ \emph{but in the opposite  direction} of $j$'s opinion. This bias can then cause the disagreement between $i$ and $j$ to grow. Individuals that become more extreme when confronted with a different opinion can be modelled by bias functions in this region. Indeed, the function $\mathtt{backf}$ from Ex.\ref{basic-bias:example} is in $\BF$ (see Fig.\ref{basic-bias:fig}).

Finally, if a point $(x,y)$ of a bias $\beta_{i,j}$ is in the Insular region $\Ins$, it means that $y=0$, thus for a disagreement $x=\Dif{j}{i} \neq 0$ between $j$ and $i$, the bias causes $i$ to completely ignore the opinion of $j$. Individuals that are stubborn or closed-minded can be modelled with the function in this region. In fact,  the function $\mathtt{ins}$ from Ex.\ref{basic-bias:example} is the only function in $\Ins$ (see Fig.\ref{basic-bias:fig}).

We conclude this section with a proposition stating that we can dispense with the clamp function whenever all the bias functions are  in the $\Rec$ region. 

\begin{proposition}[Update with Bias in $\Rec$]\label{no-clamp:prop}
    Given a Bias Opinion Model $(G,B^0,\mu_G)$ with $G = (A,E,I)$, if for all $(a,b) \in E$ we have $\bias{b,a} \in \Rec$, then for all $B \in [0,1]^{|A|}$ and $i \in A$: 
$\mu_G(B)_i =  B_i + \sum_{j \in A_i}\overline{I_{j,i}}\bias{i,j}(B_j - B_i).$
\end{proposition}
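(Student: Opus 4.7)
The plan is to show that when every bias is in $\Rec$, the expression inside the clamp already lies in $[0,1]$, so the clamp acts as the identity. I would first dispense with the trivial case in which $A_i = \emptyset$: the sum is vacuous and $\mu_G(B)_i = [B_i]_0^1 = B_i$ since $B_i \in [0,1]$. From now on assume $A_i \neq \emptyset$, so that $\sum_{j \in A_i}\overline{I_{j,i}} = 1$ (this is immediate from the definition of $\overline{I_{j,i}}$).

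Next I would extract the key pointwise property of $\Rec$: for every $(x,y) \in \Rec$, the value $y$ lies (weakly) between $0$ and $x$. More precisely, from Definition~\ref{def:biasregions}, if $x>0$ then $0<y<x$, if $x<0$ then $x<y<0$, and if $x=0$ then $y=0$. Consequently, for each $j \in A_i$, letting $x_j = B_j - B_i$ and $y_j = \bias{i,j}(x_j)$, we have $\min(0,x_j) \le y_j \le \max(0,x_j)$. Since $B_i, B_j \in [0,1]$, we get $\max(0,x_j) \le 1 - B_i$ and $\min(0,x_j) \ge -B_i$, so
\begin{equation*}
  -B_i \;\le\; \bias{i,j}(B_j - B_i) \;\le\; 1 - B_i \quad \text{for every } j \in A_i.
\end{equation*}

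Taking the convex combination of these inequalities with the weights $\overline{I_{j,i}}$ (which are nonnegative and sum to $1$) preserves the bounds, giving
\begin{equation*}
  -B_i \;\le\; \sum_{j \in A_i} \overline{I_{j,i}}\,\bias{i,j}(B_j - B_i) \;\le\; 1 - B_i,
\end{equation*}
and therefore $B_i + \sum_{j \in A_i}\overline{I_{j,i}}\bias{i,j}(B_j - B_i) \in [0,1]$. Since the clamp $[\cdot]_0^1$ is the identity on $[0,1]$, applying Equation~\ref{bias-upd:eq} yields the stated formula. There is no real obstacle here; the only subtlety worth emphasizing is that $\overline{I_{j,i}}$ forms a genuine convex combination (sums to one) precisely when $A_i \neq \emptyset$, which is why the empty-$A_i$ case must be handled separately at the outset.
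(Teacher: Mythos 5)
Your proof is correct and follows essentially the same route as the paper's: both arguments use the defining property of $\Rec$ that $\bias{i,j}(x)$ lies weakly between $0$ and $x$, combine the resulting per-neighbour bounds via the convex weights $\overline{I_{j,i}}$ (handling $A_i=\emptyset$ separately), and conclude that the pre-clamp value already lies in $[0,1]$. The only cosmetic difference is that the paper bounds the expression by $[\min(B),\max(B)]$ (a slightly stronger fact it reuses for Lemma~\ref{lembounds}), whereas you bound it directly by $[0,1]$, which suffices here.
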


The proof of this proposition can be found in the Appendix.

\subsection{Consensus under Receptiveness in Strongly Connected Graphs}\label{subsec:formal_results}

Our first main result states the convergence to consensus for strongly connected societies  when all bias functions are continuous and in the Receptive-Resistant Region defined in \ref{subsec:function_regions}.  We need some standard notions from graph theory.

Recall that a \emph{path} from $i$ to $j$ in $G=(A,E,I)$ is a sequence $i_0i_1\ldots i_m$ such that $i=i_0$, $j=i_m$ and $(i_0,i_1),(i_1,i_2),\ldots (i_{m-1},i_m)$ are edges in $E$. The graph $G$ is \emph{strongly connected} iff there is path from any agent to any other.  We can now state our first consensus result.   



\begin{theorem}[Consensus I]\label{main:theorem}
     Let  $(G,B^0,\mu)$ be a bias opinion model with a strongly connected graph $G = (A,E,I)$. Suppose that for every $(j,i)\in E$, $\bias{i,j}$ is a continuous function in $\Rec$. Then the set of agents $A$ converges to  consensus. 
\end{theorem}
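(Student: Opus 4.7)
The plan is to show $V(B^t) := M(B^t) - m(B^t) \to 0$, where $M(B) := \max_i B_i$ and $m(B) := \min_i B_i$. Since every $\beta_{i,j} \in \Rec$, Proposition~\ref{no-clamp:prop} removes the clamp, so $\mu(B)_i = B_i + \sum_{j \in A_i} \overline{\I{j}{i}}\,\beta_{i,j}(B_j - B_i)$. The defining property of $\Rec$---that $\beta_{i,j}(x)$ has the same sign as $x$ with $|\beta_{i,j}(x)| < |x|$ for $x \neq 0$---lets me set $\alpha_{i,j}(B) := \beta_{i,j}(B_j - B_i)/(B_j - B_i) \in (0,1)$ when $B_j \neq B_i$ (and $0$ otherwise), and rewrite each step as the convex combination $\mu(B)_i = (1 - S_i(B))\,B_i + \sum_{j \in A_i} \overline{\I{j}{i}}\,\alpha_{i,j}(B)\,B_j$ with $S_i(B) := \sum_{j \in A_i}\overline{\I{j}{i}}\,\alpha_{i,j}(B) \in [0,1)$. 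This immediately yields that $M(B^t)$ is non-increasing, $m(B^t)$ is non-decreasing, and hence $V(B^t) \downarrow V^* \geq 0$.

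To show $V^* = 0$, I will use compactness and continuity to produce a ``frozen'' iterate. Pick a convergent subsequence $B^{t_k} \to B^*$ in $[0,1]^n$. Continuity of $\mu$ (inherited from continuity of each $\beta_{i,j}$) and of $V$ yield $V(\mu^n(B^*)) = V^*$ for every $n \geq 0$; combined with the monotonicity of $M$ and $m$ along orbits, this forces $M(\mu^n(B^*)) \equiv M^*$ and $m(\mu^n(B^*)) \equiv m^*$ throughout the forward orbit of $B^*$. A key consequence of $\Rec$ is $\alpha_{i,j}(B) > 0$ whenever $B_j \neq B_i$; the convex-combination representation then gives $\mu(B)_i = M^*$ iff $B_i = M^*$ and $B_j = M^*$ for every $j \in A_i$. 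Hence the nonempty sets $S^n := \{i : \mu^n(B^*)_i = M^*\}$ are nested and satisfy the recursion $S^{n+1} = \{i \in S^n : A_i \subseteq S^n\}$, so by finiteness they stabilize to a nonempty $S$ with $A_i \subseteq S$ for every $i \in S$. Strong connectivity rules out $S \subsetneq A$: any path from outside $S$ into $S$ must use an edge $(u,v)$ with $u \notin S$ and $v \in S$, contradicting $A_v \subseteq S$. Thus $S = A$, some $\mu^n(B^*)$ equals the constant vector $(M^*,\ldots,M^*)$, and so $M^* = m^*$, i.e., $V^* = 0$. Squeezing $m(B^t) \leq B^t_i \leq M(B^t)$ between the common limit $L = \lim M(B^t) = \lim m(B^t)$ then gives $B^t_i \to L$ for every $i$.

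The hard part is the ``frozen iterate'' step. Because the contraction factors $\alpha_{i,j}(B)$ can approach $1$ as $B_j - B_i \to 0$ (for instance for $\mathtt{conf}$), no uniform per-step contraction of $V$ is available, so the standard DeGroot/Markov-chain proof technique---``after $T$ steps $V$ shrinks by a factor $1-\epsilon$''---does not apply. The remedy is to replace uniformity with compactness: work with a subsequence limit $B^*$ whose entire forward orbit preserves $V$, and then combinatorially propagate the maximum along edges of the strongly connected graph until the ``maximal'' set $S$ fills all of $A$.
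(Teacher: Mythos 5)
Your proof is correct and follows essentially the same strategy as the paper's: monotone bounds on $\max(B^t)$ and $\min(B^t)$ via the averaging property of $\Rec$, a Bolzano--Weierstrass subsequence limit whose forward orbit has frozen extremes by continuity of $\mu$, and a combinatorial propagation of the maximum through the strongly connected graph. The only (cosmetic) difference is that you run the propagation as a nested recursion $S^{n+1}=\{i\in S^n : A_i\subseteq S^n\}$ on the maximizer sets of the limit orbit, whereas the paper packages the same content as an ``Extreme Agents Reduction'' counting lemma showing $\max(\mu^{|A|-1}(B))<\max(B)$ for non-consensus states and then derives a contradiction.
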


Hence, the continuous bias functions in $\Rec$ guarantee consensus in strongly connected graphs, regardless of initial beliefs. Intuitively, the theorem says that a strongly connected community/society will converge towards consensus if its members are receptive but resistant to the opinions of others.    

Notice that the Vaccine Example in Sec.\ref{subsec:examples} with all agents  under confirmation bias satisfy the conditions of Th.\ref{main:theorem}, so their convergence to consensus is guaranteed. In fact, the opinion difference between any two agents grows smaller rather rapidly (Fig.\ref{fig:vaccineCB} illustrates this). In contrast, Fig.\ref{fig:slow} illustrates an example, with different a different bias also in $\Rec$, where the opinion difference grows smaller much slowly. But since such an example also satisfies the conditions of  Th.\ref{main:theorem}, convergence to consensus is guaranteed. 

\begin{figure}[ht]
\centering
\begin{subfigure}{0.4\linewidth}
    \centering
    \resizebox{\textwidth}{!}{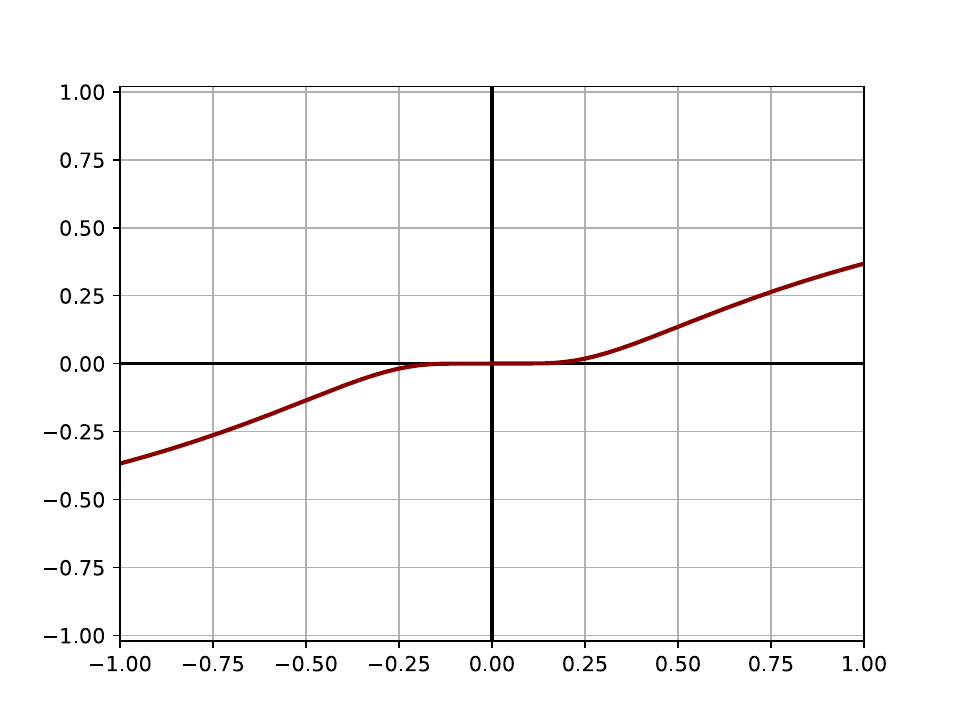}
    \caption{Bias function $\bias{1,2}(x) = \bias{2,1}(x) = \begin{cases} 
          0 & \text{if } x = 0 \\ 
          \frac{x}{|x|}\cdot e^{-\frac{1}{|x|}} &  \text{if } x \neq 0  
       \end{cases}$}\label{fig:funcSlow}
\end{subfigure}
\hfill 
\begin{subfigure}{0.4\linewidth}
    \centering
    \resizebox{\textwidth}{!}{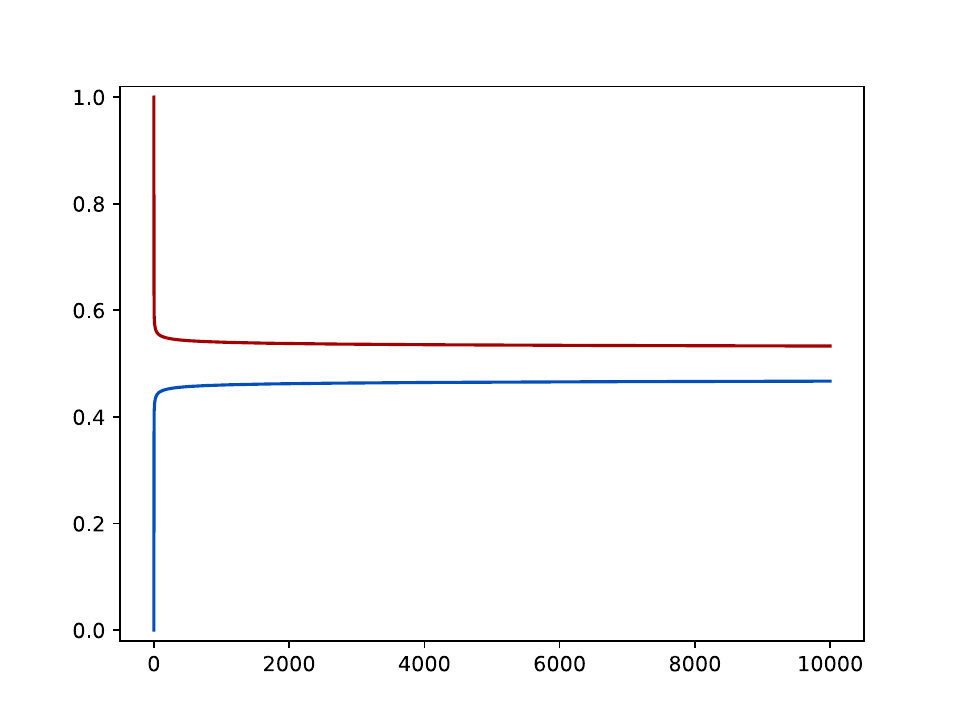}
    \caption{Each plot represents the evolution in time of the opinion of the agent in Fig.\ref{fig:twoagents} with the same color.}
\end{subfigure}
\caption{Simulations for $G$ in Fig.\ref{fig:twoagents} with $B^0 = (0.0, 1.0)$ using a bias function in region $\Rec$, with very slow convergence. Each plot represents the evolution in time of the opinion of the agent in Fig.\ref{fig:twoagents} with the same color.}\label{fig:slow}
\end{figure}

Before outlining the proof of this theorem, we elaborate on its conditions.
\paragraph{Discontinuous Bias.} Requiring continuity for the bias functions in Th.\ref{main:theorem} seems reasonable; small changes in an opinion disagreement value $x=B_j - B_i$ should result in small changes in $i$'s biased reaction to $x$. Nevertheless, if we relaxed
the continuity requirement, we would have the following counter-example. 

Consider a strongly connected graph with  two agents with $\I{1}{2} = \I{2}{1} = 1$, agent $1$ influences agent $2$ with the bias functions $\beta_{1,2} = \beta_{2,1}=f$, satisfying $f(x) = \frac{x}{2}$ if $x \in [-\frac{1}{2},\frac{1}{2}]$, $f(x) =\frac{x-0.5}{8}$ if $x \in (\frac{1}{2},1]$ and $f(x) = \frac{x+0.5}{8}$ if $x \in [-1,-\frac{1}{2})$. If one agent starts with belief value $1.0$ and the other $0.0$, then they will not converge to consensus (their belief values will approach $\frac{3}{4}$ and $\frac{1}{4}$, but will never reach those values).  Figure  \ref{fig:discontinuous} illustrates this example.

\begin{figure}[ht]
\centering
\begin{subfigure}{0.3\linewidth}
    \centering
    \resizebox{\textwidth}{!}{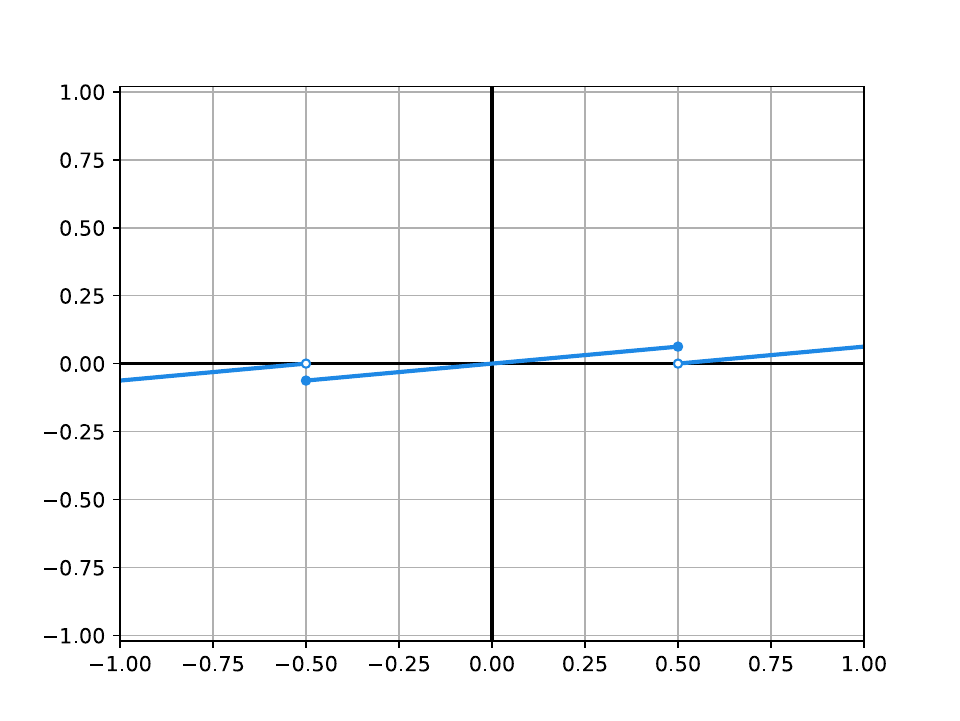}
    \caption{Bias function $\bias{1,2}(x) = \bias{2,1}(x) = \begin{cases} 
          \frac{x+0.5}{8} &  \text{if} -1 \leq x < -\frac{1}{2}  \\
          \frac{x}{8} & \frac{1}{2} \leq x \leq \frac{1}{2}  \\
          \frac{x-0.5}{8} & \text{if} \frac{1}{2} < x \leq 1 
       \end{cases}$}\label{fig:discontinuousfunc}
\end{subfigure}
\hfill 
\begin{subfigure}{0.3\linewidth}
    \centering

    \resizebox{0.3\columnwidth}{!}{%
    \begin{tikzpicture}
        \tikzstyle{every state}=[fill opacity=0.5,text opacity=1,thick,minimum size=12pt]
        \node[state, fill=blue1] (0) at (0,2) [] {1};
        \node[state, fill=red1] (1) at (0,4) [] {2};
        \draw
        (0) edge[<->] node{1.0} (1)
        ;
    \end{tikzpicture}
    }    
    \caption{Influence Graph ($I_{1,2} = I_{2,1} = 1.0$).}\label{fig:twoagents}
\end{subfigure}
\hfill 
\begin{subfigure}{0.3\linewidth}
    \centering
    \resizebox{\textwidth}{!}{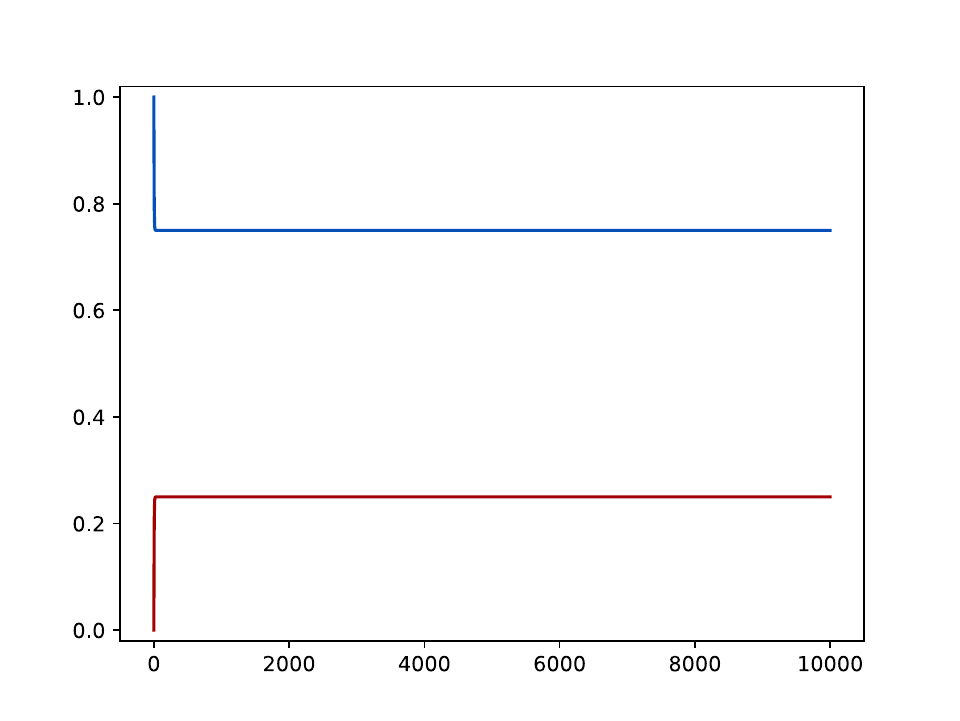}
    \caption{Each plot represents the evolution in time of the opinion of the agent in Fig.\ref{fig:twoagents} with the same color.}\label{fig:discontinuoushist}
\end{subfigure}
\caption{Counter-example to consensus for two agents with  non-continuous bias functions in $\Rec$ and with $B^0 = (1.0,0.0)$. }\label{fig:discontinuous}
\end{figure}

\begin{figure}[ht]
\centering
\begin{subfigure}{0.4\linewidth}
    \centering
    \resizebox{\textwidth}{!}{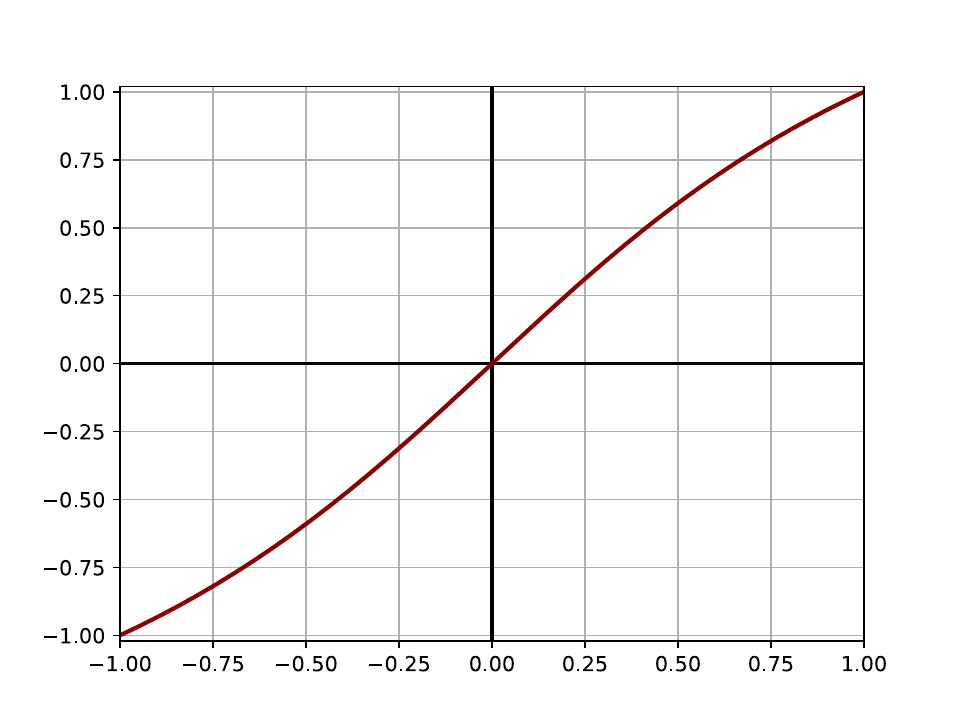}
    \caption{Bias function $\bias{1,2}(x) = \bias{2,1}(x) = \frac{\arctan x}{\arctan 1}$}\label{fig:puppetsimplerFunction}
\end{subfigure}
\hfill 
\begin{subfigure}{0.4\linewidth}
    \centering
    \resizebox{\textwidth}{!}{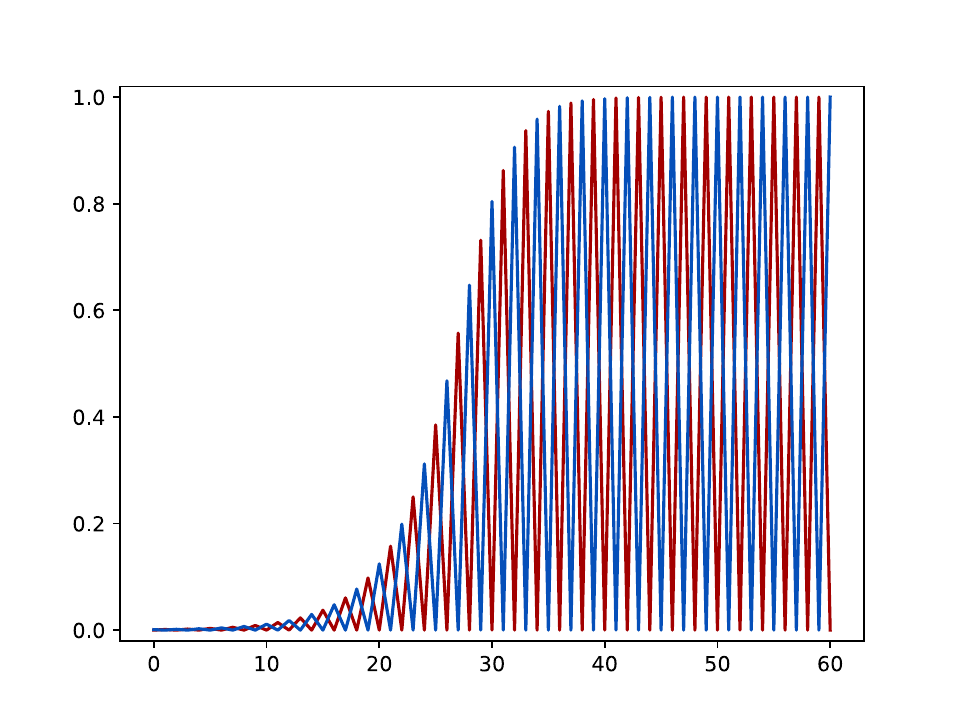}
    \caption{Each plot represents the evolution in time of the opinion of the agent in Fig.\ref{fig:twoagents} with the same color.}
\end{subfigure}
\caption{Counter-example for consensus when all bias function are continuous but allowed to have points in $\Mal$, with initial belief vector $B^0 = (0.0,0.001)$ and influence graph \ref{fig:twoagents}.}\label{fig:puppetsimpler}
\end{figure}

\paragraph{Bias Outside $\Rec$.} Notice that Th.\ref{main:theorem} requires bias functions to be in the responsive-resistant region $\Rec$.  We consider counter-examples where we allow bias functions outside this region in Th.\ref{main:theorem}.
If we allowed continuous bias functions outside $\Rec$ with points in the backfire region $\BF$, then the scenario in Fig.\ref{fig:vaccineBF} provides a counter-example to consensus. If we allow  continuous bias functions outside $\Rec$ with points in region $\Mal$, then the scenario in Fig.\ref{fig:puppetsimpler} is a counter-example to consensus: notice how the absolute value of their disagreement begins at $0.001$ and increases until it reaches $1$. Finally, it is clear that if we allowed the only function in $\Ins$, the insular bias, with the graph in Fig.\ref{fig:twoagents} and initial beliefs $B^0=(0,1)$, consensus will never be reached since the agents will ignore each other.

\subsection{Proof Outline of Th.\ref{main:theorem}.}
In this Section we outline
the proof of Th.\ref{main:theorem}. In the process we single out the central properties of the behaviour of agents that are receptive and yet resistant to disagreement. The complete proof can  can be found in the appendix. 

Let  $(G,B^0,\mu)$ be as in the statement of Th.\ref{main:theorem}.  Suppose $B=\mu^t(B^0)$ is the state of opinion at some time $t \geq 0$ where consensus has not yet been reached: i.e., assume $\min(B)\neq \max(B)$ where $\min(B)$ and $\max(B)$ are the minimum and maximum opinion values in $B$.  By assumption, all the biases $\bias{i,j}$ are in $\Rec$. Thus $\bias{i,j}(x)=y$, where $x=B_j - B_i$, contributes to update the opinion of $i$ in the direction of the opinion of $j$ but with a magnitude $|y|>0$ strictly smaller than $|x|$ if $|x|>0$ (or equal to $0$ if $|x|=0$). Using this and Prop.\ref{no-clamp:prop}\footnote{This follows from the known property that weighted averages of any set of values are always between the minimum and the maximum of those values.}, we show the new (updated) opinion of each $i$, $\mu(B)_i$, is bounded as follows:  
    
\begin{lemma}[Update Bounds]\label{lembounds} For each $i\in A$, $\min(B)\leq\mu(B)_i\leq\max(B)$. 
\end{lemma}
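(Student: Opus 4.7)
The plan is to rewrite the update $\mu(B)_i$ as a convex combination of $B_i$ and the belief values $B_j$ for $j \in A_i$, and then use the elementary fact that any convex combination of values already lying in $[\min(B), \max(B)]$ must also lie in that interval.

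First I would invoke Proposition \ref{no-clamp:prop}, which applies since every bias $\beta_{i,j}$ is in $\Rec$, in order to drop the clamp and write
$$\mu(B)_i = B_i + \sum_{j \in A_i} \overline{I_{j,i}} \, \beta_{i,j}(B_j - B_i).$$
The key structural observation is that, by the definition of $\Rec$, whenever $B_j \neq B_i$ the value $\beta_{i,j}(B_j - B_i)$ has the same sign as $B_j - B_i$ and strictly smaller absolute value, while $\beta_{i,j}(0) = 0$. I can therefore define
$$\alpha_{i,j} := \begin{cases} \beta_{i,j}(B_j - B_i)/(B_j - B_i) & \text{if } B_j \neq B_i, \\ 0 & \text{if } B_j = B_i, \end{cases}$$
which satisfies $\alpha_{i,j} \in [0,1)$ in both cases and gives $\beta_{i,j}(B_j - B_i) = \alpha_{i,j}(B_j - B_i)$.

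Substituting this into the update and regrouping, I obtain
$$\mu(B)_i = (1-\lambda)\, B_i + \sum_{j \in A_i} \overline{I_{j,i}}\, \alpha_{i,j}\, B_j, \qquad \lambda := \sum_{j \in A_i} \overline{I_{j,i}}\, \alpha_{i,j}.$$
Since the proportional influences $\overline{I_{j,i}}$ sum to $1$ over $j \in A_i$ and each $\alpha_{i,j} < 1$ on the finite set $A_i$, we get $\lambda < 1$; hence $1-\lambda > 0$, all the coefficients are non-negative, and they sum to $1$. Thus $\mu(B)_i$ is a convex combination of values in $\{B_i\} \cup \{B_j : j \in A_i\} \subseteq [\min(B), \max(B)]$, which gives the desired bound $\min(B) \leq \mu(B)_i \leq \max(B)$.

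I do not expect any real obstacle: the lemma is essentially a direct translation of the defining property of $\Rec$ (\emph{bias acts in the direction of the disagreement with strictly smaller magnitude}) into the language of convex combinations, once the clamp has been removed via Proposition \ref{no-clamp:prop}. The only minor subtleties are the degenerate case $A_i = \emptyset$, for which $\mu(B)_i = B_i$ trivially lies in the interval, and the case $B_j = B_i$, where the quotient defining $\alpha_{i,j}$ is undefined but setting $\alpha_{i,j} = 0$ is harmless since $\beta_{i,j}(0) = 0$ in $\Rec$.
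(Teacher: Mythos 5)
Your proof is correct and follows essentially the same route as the paper: both remove the clamp via Prop.~\ref{no-clamp:prop} and then use the defining property of $\Rec$ (the bias shrinks the disagreement toward zero without changing its sign) to see that the update is a weighted average of values in $[\min(B),\max(B)]$. The only cosmetic difference is that you make the convex combination explicit via the ratios $\alpha_{i,j}$, whereas the paper bounds each term $\beta_{i,j}(B_j-B_i)$ by $\max(B)-B_i$ (resp.\ $\min(B)-B_i$) and sums; both correctly handle the $A_i=\emptyset$ and $B_j=B_i$ corner cases.
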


We use the above lemma to prove that the bounded sequences of minimum and maximum opinion values at each time, $\{\min(B^t)\}_{t \geq 0}$ and $\{\max(B^t)\}_{t \geq 0}$, are monotonically non-decreasing and monotonically non-increasing. Thus by the Monotone convergence theorem \cite{Sohrab:14}, they both converge. Therefore, by the Squeeze theorem \cite{Sohrab:14}, to prove Th.\ref{main:theorem}, it suffices to show that $\{\min(B^t)\}_{t \geq 0}$ and $\{\max(B^t)\}_{t \geq 0}$ converge to the same value. 

We first prove the following lemma which intuitively states that the number of extreme agents decrease with time. 

\begin{lemma}[Extreme Agents Reduction]\label{mainlemma} Suppose that $\min(B) \neq \max(B)$ and let  $M = \max(B)$. If $G$ has a path $i_1\ldots i_n$ such that $\B{i_n} = M$ and $\B{i_1} < M$, then $|\{ j \in A : B_j \geq M\}| > |\{ j \in A:\mu(B)_j \geq M\}|$. A symmetric property applies to the minimum.
\end{lemma}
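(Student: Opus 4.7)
The plan is to exhibit a single agent whose belief equals $M$ but whose updated belief drops strictly below $M$, while simultaneously showing that no agent with belief $<M$ can be pushed up to $M$ after one step. First, along the given path, let $k$ be the smallest index with $B_{i_k}=M$; since $B_{i_1}<M$ we have $k\ge 2$, and by minimality $B_{i_{k-1}}<M$. Set $j=i_k$ and $\ell=i_{k-1}$, so $\ell\in A_j$, $B_j=M$, and $B_\ell<M$. This is my ``witness''.

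Second, I would use Proposition~\ref{no-clamp:prop} to write
\[
\mu(B)_j \;=\; B_j \;+\; \sum_{k'\in A_j}\overline{I_{k',j}}\,\bias{j,k'}(B_{k'}-B_j).
\]
For every $k'\in A_j$ we have $B_{k'}\le M=B_j$, i.e.\ $B_{k'}-B_j\le 0$; since $\bias{j,k'}\in\Rec$, the region's definition forces $\bias{j,k'}(B_{k'}-B_j)\le 0$. For the distinguished neighbour $\ell$ the disagreement is strictly negative, and the $\Rec$ condition ($x<0 \Rightarrow x<y<0$) sharpens this to $\bias{j,\ell}(B_\ell-B_j)<0$. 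As $\overline{I_{\ell,j}}>0$ and every other summand is non-positive, the total sum is strictly negative, giving $\mu(B)_j<M$.

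Third, I need to rule out the possibility that some agent $j'$ with $B_{j'}<M$ is pushed to $\mu(B)_{j'}=M$; otherwise the size of the max-attaining set could fail to shrink. The key observation is that on $\Rec$ any point $(x,\bias{j',k'}(x))$ with $x\neq 0$ satisfies $\bias{j',k'}(x)=\alpha\,x$ with $\alpha\in(0,1)$, so the update can be rewritten as a genuine convex combination
\[
\mu(B)_{j'} \;=\; (1-S)\,B_{j'} \;+\; \sum_{k'\in A_{j'}} w_{k'}\,B_{k'},
\]
where the weights $w_{k'}=\overline{I_{k',j'}}\,\alpha_{j',k'}(B_{k'}-B_{j'})\ge 0$ and $S=\sum_{k'}w_{k'}$ satisfy $1-S>0$ (strictly, because each $\alpha<1$ whenever $B_{k'}\neq B_{j'}$; and trivially if all neighbours match $B_{j'}$ then $S=0$). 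Since every $B_{k'}\le M$ and $B_{j'}<M$ enters with positive weight $1-S$, we obtain $\mu(B)_{j'}<M$.

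Finally, combining the three steps with Lemma~\ref{lembounds} (which gives $\{j':\mu(B)_{j'}\ge M\}=\{j':\mu(B)_{j'}=M\}$) yields the strict inclusion
\[
\{j':\mu(B)_{j'}\ge M\} \;\subseteq\; \{j':B_{j'}=M\} \;=\; \{j':B_{j'}\ge M\},
\]
which is proper because $j$ lies in the right-hand set but not in the left. The cardinality inequality follows. The symmetric statement for the minimum is proved identically, flipping signs: reverse the direction along the path, use that $\bias{j,\ell}(x)>0$ for $x>0$ in $\Rec$, and replace upper bounds by lower bounds in the convex-combination argument. I expect Step three to be the main obstacle: verifying that the self-coefficient of $B_{j'}$ in the convex-combination rewriting is strictly positive is precisely where the strict inequality $|y|<|x|$ characterizing $\Rec$ (as opposed to merely $|y|\le|x|$) is essential, and it is also the feature that fails on $\Mal$ and $\BF$ and drives the counter-examples illustrated in Figures~\ref{fig:vaccineBF} and~\ref{fig:puppetsimpler}.
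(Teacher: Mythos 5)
Your proof is correct and follows essentially the same route as the paper's: pick the first agent along the path whose belief equals $M$ as the witness whose updated value strictly drops (its predecessor contributes a strictly negative term while all other contributions are non-positive), and separately show that no agent with belief below $M$ can be promoted to $M$, using the strict contraction $|y|<|x|$ characterizing $\Rec$. The only (cosmetic) difference is in that second step, where the paper bounds each summand directly via its auxiliary Lemma~\ref{aux1A} instead of your convex-combination rewriting with a strictly positive self-weight; both arguments rest on exactly the same strictness property of $\Rec$.
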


To see the lemma's intuition, notice that  since  $G$ is strongly connected and $\min(B)<\max(B)$, $G$ indeed has a path $i_1\ldots i_n$ such that $\B{i_n} = M =\max(B)$ and $\B{i_1} < M$. In the path some agent $i_k$ whose belief value is equal to $M$ will be influenced by some agent with smaller belief value. Thus, since the bias functions are in $\Rec$, the opinion of $i_k$ will change in the direction of the smaller value, and thus will strictly decrease. Also, no agent that had a smaller  belief value will reach the current maximum, as the bias functions are in the region $\Rec$. 

Thus, because of  Lem.\ref{mainlemma} and $G$ being strongly connected, we conclude that the maximum (minimum) belief value will eventually decrease (increase). I.e.,   

\begin{corollary}\label{maincor}  Suppose that $\min(B) \neq \max(B).$ Then there exist $s,t>0$ such that $\max(\mu^s(B))<\max(B)$ and $\min(\mu^t(B))>\min(B).$
\end{corollary}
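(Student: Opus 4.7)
My plan is to argue by contradiction that $\max(\mu^t(B))$ cannot remain equal to $M := \max(B)$ forever. The engine is Lemma \ref{mainlemma}, which drives down the count of max-attaining agents one step at a time; strong connectivity, together with a simple monotonicity invariant, will keep supplying its hypothesis until that count is forced to zero. The argument for the minimum is entirely symmetric, so I concentrate on the maximum.

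The preliminary step is the invariant: \emph{if $B_i < M$, then $\mu(B)_i < M$}. Using Proposition \ref{no-clamp:prop} to drop the clamp (valid because all biases lie in $\Rec$), I would show each summand $\bias{i,j}(B_j - B_i)$ is strictly less than $M - B_i$. When $B_j > B_i$, the $\Rec$ condition gives $\bias{i,j}(B_j - B_i) < B_j - B_i \le M - B_i$; when $B_j \le B_i$, the summand is $\le 0 < M - B_i$. Because the weights $\overline{\I{j}{i}}$ are strictly positive and sum to $1$ over the (possibly empty, in which case the claim is trivial) set $A_i$, the strict inequality survives the weighted average, yielding $\mu(B)_i < M$. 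Iterating, the set of agents strictly below $M$ is non-decreasing in $t$ and non-empty at $t=0$ (since $\min(B) < M$), hence non-empty for every $t$.

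Now suppose, towards a contradiction, that $\max(\mu^t(B)) = M$ for every $t \ge 0$, and let $S_t := \{j \in A : \mu^t(B)_j \ge M\}$, which by Lemma \ref{lembounds} equals $\{j : \mu^t(B)_j = M\}$. By the invariant, at every time $t$ there exist both an agent attaining $M$ and an agent strictly below $M$, so strong connectivity of $G$ supplies a path from the latter to the former. Lemma \ref{mainlemma}, applied to the state $\mu^t(B)$ whose maximum is $M$, then yields $|S_{t+1}| < |S_t|$. This produces a strictly decreasing sequence of non-negative integers, which must hit $0$ in at most $|S_0| \le n$ steps — contradicting the assumption $S_t \ne \emptyset$ for all $t$. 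Hence there exists $s > 0$ with $\max(\mu^s(B)) < M$. The dual invariant, namely $B_i > \min(B) \implies \mu(B)_i > \min(B)$, combined with the minimum-side version of Lemma \ref{mainlemma}, produces the corresponding $t > 0$ for the minimum.

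The main obstacle, as I see it, is the invariant: one must verify carefully that the strict inequality $y < x$ from the $\Rec$ region survives the weighted averaging over neighbours whose beliefs may be equal to or below $B_i$, ruling out the possibility that an agent strictly below the current maximum suddenly reaches it. Once that invariant is secured, the rest is a textbook finite-descent argument on $|S_t|$ licensed by Lemma \ref{mainlemma} and strong connectivity.
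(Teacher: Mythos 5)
Your proof is correct and follows essentially the same route as the paper's: the appendix version of this statement (Lemma \ref{decmaxA}, which additionally extracts the explicit bound $s=t=|A|-1$) likewise iterates Lemma \ref{mainlemma} to drive the count of max-attaining agents down to zero by finite descent, using strong connectivity to supply the required path at each step. The invariant you isolate --- that agents strictly below the maximum stay strictly below it --- is the same fact the paper establishes inside the proof of Lemma \ref{mainlemma}, so the two arguments differ only in presentation (your contradiction framing versus the paper's direct count).
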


We now apply Bolzano-Weierstrass theorem \cite{Sohrab:14}\footnote{Every infinite bounded sequence in $\R^{n}$ has a convergent sub-sequence.} to find a sub-sequence $\{ \Bt{}{t}\}_{t \in \Delta}$ of $\{\Bt{}{t}\}_{t \in \N}$ that converges to some $\Bt{}{\infty}$. Notice that $\{ \max(\Bt{}{t})\}_{t \in \Delta}$ converges to $\max(\Bt{}{\infty})$ and it is a sub-sequence of the convergent sequence $\{ \max(\Bt{}{t})\}_{t \in \N}$, so $\{ \max(\Bt{}{t})\}_{t \in \N}$ should also converge to $\max (\Bt{}{\infty})$. Since  each bias function $\bias{i,j}$ is continuous, the update function $\mu$ is continuous. Therefore, $\{ \mu (\Bt{}{t})\}_{t \in \Delta}$ converges to $\mu (\Bt{}{\infty})$, and thus $\{\max(\mu (\Bt{}{t}))\}_{t \in \Delta}$ converges to $\max(\mu (\Bt{}{\infty}))$. But since the sequence $\{\max(\mu (\Bt{}{t}))\}_{t \in \Delta} = \{\max(\Bt{}{t+1})\}_{t \in \Delta}$ is a sub-sequence of the convergent sequence $\{ \max(\Bt{}{t})\}_{t \in \N}$, both must converge to the same value, hence $\max(\Bt{}{\infty}) = \max(\mu (\Bt{}{\infty}))$. Similarly, we can show that $\min(\Bt{}{\infty}) = \min(\mu (\Bt{}{\infty}))$. It can thus be shown that if we repeatedly apply $\mu $ to $\Bt{}{\infty}$, the maximum should not change, and the same applies to the minimum. More precisely, we conclude the following.  

\begin{corollary}
\label{maincor2} $\max(\Bt{}{\infty}) = \max(\mu^t (\Bt{}{\infty}))$ and   $\min(\Bt{}{\infty}) = \min(\mu^t (\Bt{}{\infty}))$ for each $t\geq 0$.
\end{corollary}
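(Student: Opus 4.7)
The plan is to bootstrap the single-step equalities $\max(\Bt{}{\infty}) = \max(\mu(\Bt{}{\infty}))$ and $\min(\Bt{}{\infty}) = \min(\mu(\Bt{}{\infty}))$ established in the preceding paragraph into statements about arbitrary iterates $\mu^{t}$, using only continuity of $\mu$ and the elementary fact that every subsequence of a convergent real sequence shares its limit. No new properties of the bias model are needed.

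First I would promote the subsequence convergence $\{\max(\Bt{}{t})\}_{t \in \Delta} \to \max(\Bt{}{\infty})$ to full-sequence convergence. By iterating Lem.\ref{lembounds}, the sequence $\{\max(\Bt{}{t})\}_{t \in \N}$ is monotonically non-increasing and bounded below, so by the Monotone convergence theorem it converges; since a convergent real sequence and its subsequences share a limit, that limit must be $\max(\Bt{}{\infty})$. The symmetric statement $\{\min(\Bt{}{t})\}_{t \in \N} \to \min(\Bt{}{\infty})$ follows the same way. Now fix any $t \geq 0$. Since $\mu$ is continuous, so is $\mu^{t}$. Applying $\mu^{t}$ to the convergent subsequence $\{\Bt{}{s}\}_{s \in \Delta} \to \Bt{}{\infty}$ gives $\{\Bt{}{s+t}\}_{s \in \Delta} = \{\mu^{t}(\Bt{}{s})\}_{s \in \Delta} \to \mu^{t}(\Bt{}{\infty})$, and continuity of $\max$ yields $\{\max(\Bt{}{s+t})\}_{s \in \Delta} \to \max(\mu^{t}(\Bt{}{\infty}))$. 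But $\{\max(\Bt{}{s+t})\}_{s \in \Delta}$ is a subsequence of the full sequence $\{\max(\Bt{}{u})\}_{u \in \N}$, whose limit was just shown to be $\max(\Bt{}{\infty})$; hence $\max(\mu^{t}(\Bt{}{\infty})) = \max(\Bt{}{\infty})$. The argument with $\min$ in place of $\max$ gives the dual equality.

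The hard part is essentially nonexistent: the only subtlety is to notice that the shifted index set $\{s + t : s \in \Delta\}$ is still a subset of $\N$, so $\{\Bt{}{s+t}\}_{s \in \Delta}$ is indeed a subsequence of the original sequence $\{\Bt{}{u}\}_{u \in \N}$, and that continuity of $\mu^{t}$ is immediate from continuity of $\mu$ by composition. Alternatively, one could phrase the whole argument as an induction on $t$, the inductive step being exactly the continuity/subsequence argument already used to derive the $t = 1$ case from $\{\Bt{}{s}\}_{s \in \Delta} \to \Bt{}{\infty}$, applied now to $\mu^{t}(\Bt{}{\infty})$ realized as the limit of the subsequence $\{\Bt{}{s+t}\}_{s \in \Delta}$.
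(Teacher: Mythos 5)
Your proof is correct and follows essentially the same route as the paper: the paper likewise fixes the limit $U$ of the monotone full sequence $\{\max(\Bt{}{t})\}_{t\in\N}$, applies continuity of $\mu^{k}$ to the Bolzano--Weierstrass subsequence to get $\{\max(\Bt{}{s+k})\}_{s\in\Delta}\to\max(\mu^{k}(\Bt{}{\infty}))$, and identifies this shifted subsequence's limit with $U=\max(\Bt{}{\infty})$ (the appendix carries this out for $k=|A|-1$; your version for arbitrary $t$ is the same argument). No gaps.
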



 Consequently,  if  $\min(\Bt{}{\infty}) \neq \max( \Bt{}{\infty})$ then Cor.\ref{maincor} and Cor.\ref{maincor2} lead us to a contradiction.  Therefore, $\min(\Bt{}{\infty}) = \max(\Bt{}{\infty})$ and thus, $\{\min(B^t)\}_{t \geq 0}$ and $\{\max(B^t)\}_{t \geq 0}$ converge to the same value $\max (\Bt{}{\infty})$ as wanted. \qed

\subsection{Consensus Under Receptiveness in Arbitrary Graphs}

Recall that Th.\ref{main:theorem} applies to strongly connected influence graphs. Our second main result  applies to arbitrary influence graphs. First we need to recall the notion of strongly-connected components of a graph. 

A \emph{strongly-connected component} of $G$ is a \emph{maximal subset} $S \subseteq A$ such that for each two $i,j\in S$, there is path from $i$ to $j$. A strongly-connected component $S$ is said to be a \emph{source component}  iff there is no edge $(i,j)\in E$ such that $i \in A\setminus S$ and $j\in S$. We use $\source(G)$ to denote the set of source components of $G$. 

Intuitively, a source component of a graph can be thought of as a closed group that is not  externally influenced but may influence individuals outside the group. The following theorem gives a characterization of consensus with biases in $\Rec$  for arbitrary graphs in terms of source components. 

\begin{theorem}[Consensus II]\label{consens2} Let  $(G,B^0,\mu)$ be a bias opinion model with $G=(A,E,I)$. Suppose that for every $(j,i)\in E$, $\bias{i,j}$ is a continuous function in $\Rec$.
Then the set of agents $A$ converges to consensus iff there exists $v\in[0,1]$ such that every source component $S \in  \source(G)$ converges to opinion $v.$
\end{theorem}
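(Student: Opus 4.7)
The forward direction is immediate: if every agent in $A$ converges to $v$, then so does each source component. For the converse, the plan is to order the strongly connected components of $G$ as $C_1,\dots,C_m$ in a topological order of the condensation DAG, so that the source components form an initial segment and each prefix $\hat{C}_k:=C_1\cup\dots\cup C_k$ is closed under influences (every edge into an agent of $\hat{C}_k$ originates in $\hat{C}_k$). Each source component has self-contained dynamics and is strongly connected, so Th.\ref{main:theorem} applies and, by hypothesis, yields convergence to $v$. I would then proceed by induction on $k$: assuming every agent in $\hat{C}_{k-1}$ converges to $v$, I show the same for the agents in a non-source $C_k$.

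For the inductive step I exploit that $\hat{C}_k$ is closed under influences to replay the strategy from the proof of Th.\ref{main:theorem}. Lem.\ref{lembounds} applied to the restricted (closed) dynamics on $\hat{C}_k$ gives that $\max_{\hat{C}_k}(B^t)$ is non-increasing and $\min_{\hat{C}_k}(B^t)$ is non-decreasing, so both converge to limits $U$ and $L$; since $\hat{C}_{k-1}\subseteq \hat{C}_k$ and the induction hypothesis forces $\max_{\hat{C}_{k-1}}(B^t), \min_{\hat{C}_{k-1}}(B^t)\to v$, one gets $L\le v\le U$. The goal becomes $U=L=v$. Suppose $U>v$ (the case $L<v$ is symmetric). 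Use Bolzano--Weierstrass to extract a subsequence along which the restriction of $B^t$ to $\hat{C}_k$ converges to some $B^\infty$; continuity of the restricted update $\mu$ together with the convergence of $\{\max_{\hat{C}_k}(B^t)\}$ yields the analogue of Cor.\ref{maincor2}: $\max_{\hat{C}_k}(\mu^s(B^\infty))=U$ for every $s\ge 0$. By the induction hypothesis $B^\infty_j=v<U$ for every $j\in\hat{C}_{k-1}$, and these ancestors remain pinned at $v$ under every iterate $\mu^s$ (using $\beta(0)=0$ for biases in $\Rec$ and closedness of the ancestor subsystem), so $U$ is always attained by some agent in $C_k$. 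Because $C_k$ is not a source, $G$ has an edge from some $j^*\in\hat{C}_{k-1}$ into $C_k$, and strong connectivity of $C_k$ extends this to a path from $j^*$ (with opinion $v<U$) to any max-attaining agent $i^*\in C_k$. Iterating Lem.\ref{mainlemma} on the restricted system then forces $|\{j\in\hat{C}_k:\mu^{s}(B^\infty)_j\ge U\}|$ to strictly decrease in $s$, so it reaches $0$ after at most $|\hat{C}_k|$ steps, producing $\max_{\hat{C}_k}(\mu^s(B^\infty))<U$ and contradicting the Cor.\ref{maincor2} analogue. Hence $U=L=v$, and the Squeeze Theorem applied via $\min_{\hat{C}_k}(B^t)\le B^t_i\le\max_{\hat{C}_k}(B^t)$ for each $i\in C_k$ gives $B^t_i\to v$.

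The main obstacle is justifying the iterated application of Lem.\ref{mainlemma} on the non-strongly-connected closed subsystem $\hat{C}_k$. The decisive point is that $\hat{C}_{k-1}$ is pinned at $v$ under every iterate $\mu^s(B^\infty)$, guaranteeing a fresh lower-opinion source for the required path into $C_k$ at each step; the assumption that $C_k$ is not a source is exactly what makes such a path exist, and without it the argument breaks down (consistent with the fact that, without this hypothesis, distinct source components could converge to different values).
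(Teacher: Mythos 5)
Your argument is correct, and its engine is the same as the paper's: extract a convergent subsequence via Bolzano--Weierstrass, use continuity of $\mu$ to show the limiting maximum is invariant under iteration (your analogue of Cor.\ref{maincor2}), and then contradict this invariance by running the extreme-agents-reduction lemma (Lem.\ref{mainlemma}) along a path that starts at an agent pinned at $v$ and ends at a maximum-attaining agent. Where you differ is in the decomposition: the paper does all of this in one shot on the full graph $A$, using the fact that every agent is reachable from some source component to supply the required path (Lem.\ref{decmaxA} applied to the limit point $B^\infty$, with $i_1$ taken in a source component), whereas you induct over a topological order of the condensation and restrict the dynamics to each influence-closed prefix $\hat{C}_k$, handling one non-source component at a time. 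Your route costs extra bookkeeping (well-definedness of the restricted update, the inequalities $L\le v\le U$ from the induction hypothesis, the observation that the maximum of the limit point over $\hat{C}_k$ must be attained inside $C_k$ because the ancestors sit at $v<U$), but it buys precision at exactly the point where the paper is terse: the iterated use of Lem.\ref{mainlemma} inside Lem.\ref{decmaxA} needs the path hypothesis to be re-established at every intermediate state $\mu^s(B^\infty)$, and your remark that the already-converged ancestors stay pinned at $v$ under every iterate (since biases in $\Rec$ send $0$ to $0$ and the ancestor set is closed) is precisely the justification required, both in your restricted subsystem and, implicitly, in the paper's global argument. One small redundancy: invoking Th.\ref{main:theorem} for the base case is unnecessary, since the hypothesis of the theorem already asserts that every source component converges to $v$.
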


The above theorem, whose proof can be found in the Appendix, provides the following intuitive yet insightful remark. Namely, upon agreeing on an opinion, 
the closed and potentially influential groups, can make all individuals converge to that opinion in a society whose members are receptive but resistant. 

%
%


%
%

%
%
\section{Concluding Remarks and Related Work}
\label{sec:related_work}


We introduced a generalization of the DeGroot Model where agents interact under different biases. We identified the notion of bias on disagreement and made it the focus our model. This allowed us to identify families of biases that capture a broader range of social dynamics. We also provided theoretical results characterizing the notion of consensus for a broad family of cognitive biases.


 The relevance of biased reasoning in human interactions has been studied extensively in \cite{Psychology1}, \cite{Psychology2}, \cite{Psychology3}, and others. 
There is a great deal of work on formal models for belief change in social networks; we focus on the work on \emph{biased} belief update, which is the focus of this paper.  Some models were previously proposed to generalize the DeGroot model and introduce bias, for instance \cite{Generalize1}, \cite{Generalize2} and \cite{Generalize3} analyse the effects of incorporating a bias factor for each agent to represent biased assimilation: how much of the external opinions the agent will take into consideration. \cite{GeneralizeBF1} extends the model \cite{Generalize1} to include the effect of backfire-effect as well. The main difference of these models to our model is that biases are not incorporated in those models in terms of the disagreement level between agents, but either as an exponential factor that reduces the impact of the opinion of neighbours or by dynamically changing the weights of the DeGroot model. Thus, our model brings a new point of view to how distinct types of biases can be represented and identified. 

In \cite{mf}, it is proved that ``constricting'' update functions, roughly, functions where the extreme agents move closer to each other, lead to convergence in strongly connected social networks. This is similar to our theorem, indeed, the functions in our $\Rec$ region are easily shown to be constricting.  However, their social network model is more abstract than ours and further from real social networks, and they do not directly analyse biases as a function of disagreement.


%
%
%

\bibliographystyle{splncs04}
\bibliography{polar}

\appendix

\section{Appendix}

\begin{remark}[Fixed bias opinion model throughout the appendix]\label{rmkcaseA}
    For all our formal results we assume a fixed bias opinion model $(G,B^0,\mu)$ with $G = (A,E,I)$. Also, we assume an arbitrary $B \in [0,1]^{|A|}$.
\end{remark}

\begin{remark}[All bias functions are in $\Rec$]\label{rmkbiasA}
    For all our formal results, we assume that all bias functions are in region $\Rec$. Formally, this means that for all $x \in [-1,1]$ and $(j,i) \in E$, we have $0 > \bias{i,j}(x) > x$ if $x < 0$ and $0 < \bias{i,j}(x) < x$ if $x > 0$ and $\bias{i,j}(x) = 0$ if $x=0$. 
\end{remark}

\begin{remark}[All bias functions are continuous]\label{rmkcontA}
    Also, for all our results, we assume continuity of all bias functions. This means that for all $(j,i) \in E$ and for any sequence $\{x_n\}_{n \in \N}$ satisfying $\forall n \in \N: x_n \in [-1,1]$ and $\lim_{n \rightarrow \infty} x_n = L$ for some $L \in [-1,1]$, we have $\lim_{n\rightarrow \infty}\bias{i,j}(x_n) = \bias{i,j}(L)$.
\end{remark}

\begin{lemma}\label{aux1A}
    If $M \in [0,1]$ such that $M \geq \Bt{k}{}$ for any $k\in A$, then for any $(j,i) \in E$, if $\Bt{j}{} \neq \Bt{i}{}$, then $\Fdift{j}{i}{} < M-\Bt{i}{}$. Otherwise, $\Fdift{j}{i}{} \leq M-\Bt{i}{}$. A symmetrical property applies for the minimum.
\end{lemma}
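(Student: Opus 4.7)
The statement is essentially an unpacking of the defining inequalities of $\Rec$ (as recalled in Remark~\ref{rmkbiasA}) combined with the assumption that $M$ upper-bounds every belief value. The plan is therefore a straightforward three-way case analysis on the sign of the disagreement $B_j - B_i$, applying in each case the appropriate $\Rec$-inequality and the bound $M \geq B_k$.

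First I would dispatch the degenerate case $B_j = B_i$. Here the disagreement is $0$, so $\bias{i,j}(B_j - B_i) = 0$ by the $\Rec$ assumption, while $M - B_i \geq 0$ since $M \geq B_i$; this confirms the ``otherwise'' (weak) clause of the lemma. For $B_j \neq B_i$ there are two remaining sub-cases. If $B_j > B_i$, the disagreement is strictly positive and $\Rec$ yields $\bias{i,j}(B_j - B_i) < B_j - B_i$; chaining with $B_j - B_i \leq M - B_i$ (which follows from $M \geq B_j$) gives the desired strict inequality. If $B_j < B_i$, the disagreement is strictly negative and $\Rec$ yields $\bias{i,j}(B_j - B_i) < 0$, while $M \geq B_i$ gives $M - B_i \geq 0$, so again $\bias{i,j}(B_j - B_i) < 0 \leq M - B_i$.

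The symmetric statement for a lower bound $m \leq B_k$ would be proved by the same three-way case split with inequalities reversed: at $B_j = B_i$ the bias is $0 \geq m - B_i$; for $B_j < B_i$, $\Rec$ forces $\bias{i,j}(B_j - B_i) > B_j - B_i \geq m - B_i$; and for $B_j > B_i$, $\bias{i,j}(B_j - B_i) > 0 \geq m - B_i$. I do not expect any genuine obstacle here; the only subtlety to watch for is that $\Rec$ only enforces strict inequality away from $x = 0$, which is precisely why the lemma must degrade to a weak inequality exactly when $B_j = B_i$.
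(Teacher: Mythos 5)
Your proof is correct and follows essentially the same route as the paper's: a three-way case split on the sign of $B_j - B_i$, applying in each case the defining $\Rec$ inequality together with the bound $M \geq B_k$, with the symmetric argument for the minimum. No gaps.
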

\begin{proof}
    Let $(j,i) \in E$ be arbitrary. By Rmk.\ref{rmkbiasA}, if $\Dift{j}{i}{} < 0$, we have $\Fdift{j}{i}{} < 0 \leq M-\Bt{i}{}$, and if $\Dift{j}{i}{} > 0$, we have $\Fdift{j}{i}{} < \Dift{j}{i}{} \leq M-\Bt{i}{}$. This proves the first part of the lemma. If $\Dift{j}{i}{} = 0$, $\Fdift{j}{i}{} = 0 \leq M-\Bt{i}{}$ (because $(0,0)$ is the only point in $\Rec$ with $x=0$). This proves the second part of the lemma.

    The proof for the minimum is symmetrical.
\end{proof}

\begin{lemma}\label{boundnoclampA}
    Let $K_i = B_i + \sum_{j \in A_i}\overline{I_{j,i}}\bias{i,j}(B_j - B_i)$. Then for any $i \in A$, $\min(B) \leq  K_i \leq \max(B)$.
\end{lemma}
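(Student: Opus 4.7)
The plan is to apply Lemma~\ref{aux1A} termwise to bound each biased disagreement $\bias{i,j}(B_j - B_i)$, then combine the bounds using the fact that the proportional influences $\overline{I_{j,i}}$ sum to $1$ over $A_i$.

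First I would handle the trivial case where $A_i = \emptyset$: here the sum is empty and $K_i = B_i$, which clearly lies in $[\min(B), \max(B)]$.

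For the main case $A_i \neq \emptyset$, set $M = \max(B)$ and $m = \min(B)$. By Lemma~\ref{aux1A}, for every $(j,i)\in E$ we have $\bias{i,j}(B_j - B_i) \leq M - B_i$ and, symmetrically, $\bias{i,j}(B_j - B_i) \geq m - B_i$. Since $\overline{I_{j,i}} \geq 0$, multiplying these inequalities by $\overline{I_{j,i}}$ preserves their direction, so
\begin{equation*}
\overline{I_{j,i}}\,(m - B_i) \;\leq\; \overline{I_{j,i}}\,\bias{i,j}(B_j - B_i) \;\leq\; \overline{I_{j,i}}\,(M - B_i).
\end{equation*}
Summing over $j \in A_i$ and using that $\sum_{j \in A_i} \overline{I_{j,i}} = 1$ (by definition of the proportional influence when $A_i$ is nonempty), I get
\begin{equation*}
m - B_i \;\leq\; \sum_{j \in A_i} \overline{I_{j,i}}\,\bias{i,j}(B_j - B_i) \;\leq\; M - B_i.
\end{equation*}
Adding $B_i$ to each side yields $m \leq K_i \leq M$, which is the desired bound.

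There is no real obstacle here: the lemma is essentially the observation that a weighted average of values in $[m - B_i, M - B_i]$ lies in $[m - B_i, M - B_i]$, shifted by $B_i$. The only subtlety is making sure that $\overline{I_{j,i}}$ truly normalizes to $1$ on $A_i$ (as stated in the remark following Definition~\ref{d1}), and remembering to separate out the edge case $A_i = \emptyset$ where the normalization is undefined but the conclusion is immediate.
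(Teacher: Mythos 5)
Your proof is correct and follows essentially the same route as the paper's: both apply Lemma~\ref{aux1A} termwise with $M=\max(B)$ (and symmetrically with the minimum), use that $\sum_{j\in A_i}\overline{I_{j,i}}=1$ when $A_i\neq\emptyset$, and treat $A_i=\emptyset$ as a separate immediate case. No gaps.
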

\begin{proof}
		\begin{align*}
			K_i &= \B{i}+\sum_{j\in A_i}\overline{I_{j,i}}\Fdift{j}{i}{t}\qquad \text{(by definition)}\\
			&\leq \B{i}+\sum_{j\in A_i}\overline{I_{j,i}}(\max(B)-\B{i})\qquad \text{(Lem.\ref{aux1A} with $M = \max(B)$)}
		\end{align*}
  If $A_i = \emptyset$, we have $K_i = \B{i}\leq \max(B)$. If not, we have $K_i\leq \B{i} + \max{B}-\B{i} = \max(B)$. We can find a symmetrical proof for the minimum ($\min(B) \leq K_i$).
\end{proof}

\begin{proposition}[Update with Bias in $\Rec$, Prop.\ref{no-clamp:prop} of section \ref{sec:biasAndConsensus}]\label{noclampA} For all $i \in A$:

\begin{equation*}
\mu(B)_i =  B_i + \sum_{j \in A_i}\overline{I_{j,i}}\bias{i,j}(B_j - B_i)
\end{equation*} 
\end{proposition}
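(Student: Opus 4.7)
The plan is to observe that this proposition is essentially an immediate corollary of Lem.\ref{boundnoclampA} which appears just above it in the appendix. The only nontrivial content is verifying that the unclamped expression already lies in $[0,1]$, which is exactly what Lem.\ref{boundnoclampA} establishes (modulo observing that $B \in [0,1]^{|A|}$ forces $\min(B) \geq 0$ and $\max(B) \leq 1$).

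Concretely, I would fix an arbitrary $i \in A$ and let $K_i = B_i + \sum_{j \in A_i}\overline{I_{j,i}}\bias{i,j}(B_j - B_i)$, so that by definition of the bias update $\mu(B)_i = [K_i]_0^1$. By Lem.\ref{boundnoclampA}, we have $\min(B) \leq K_i \leq \max(B)$. Since by Rmk.\ref{rmkcaseA} we assume $B \in [0,1]^{|A|}$, it follows that $0 \leq \min(B)$ and $\max(B) \leq 1$, hence $K_i \in [0,1]$. Therefore $[K_i]_0^1 = K_i$, yielding $\mu(B)_i = K_i$ as desired.

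There is no real obstacle here; the proposition is a direct consequence of the preceding lemma. The only subtlety worth mentioning explicitly is that Lem.\ref{boundnoclampA} uses the assumption that each $\bias{i,j} \in \Rec$ (via Lem.\ref{aux1A}) to obtain the upper bound by $\max(B)$ and, symmetrically, the lower bound by $\min(B)$; without the $\Rec$ assumption, the weighted sum of bias contributions could push $K_i$ outside $[0,1]$, which is precisely why the clamp is needed in general but dispensable here.
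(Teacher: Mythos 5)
Your proof is correct and follows exactly the paper's own argument: both apply Lem.\ref{boundnoclampA} to bound the unclamped expression $K_i$ between $\min(B)$ and $\max(B)$, then use $B \in [0,1]^{|A|}$ to conclude the clamp acts as the identity. Your version merely spells out the final step in slightly more detail.
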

\begin{proof}
    Let $i\in A$ be an arbitrary agent. By using Lem.\ref{boundnoclampA} (notice that $\mu(B)_i = \left[ K_i \right]^1_0$) and the fact that $B \in [0,1]^{|A|}$, we conclude that $\mu(B)_i = K_i$, which is exactly what the proposition says.
\end{proof}

\begin{lemma}[Update Bounds, Lem.\ref{lembounds} of Sec.\ref{sec:biasAndConsensus}]\label{lemboundsA}
    For each $i\in A$, $\min(B)\leq\mu(B)_i\leq\max(B)$.
\end{lemma}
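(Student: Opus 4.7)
The plan is brief since this lemma is essentially an immediate consequence of the two preceding results. First, I would invoke Prop.\ref{noclampA} (applicable because every bias is in $\Rec$ by Rmk.\ref{rmkbiasA}) to drop the clamp from the update, yielding
\begin{equation*}
    \mu(B)_i \;=\; B_i + \sum_{j \in A_i}\overline{I_{j,i}}\,\bias{i,j}(B_j - B_i),
\end{equation*}
which is exactly the quantity $K_i$ defined in Lem.\ref{boundnoclampA}. That lemma then supplies the inequalities $\min(B) \leq K_i \leq \max(B)$ directly, concluding the argument in one line.

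If I had to re-derive the bound rather than cite Lem.\ref{boundnoclampA}, the central step would be Lem.\ref{aux1A}: since every bias lies in the Receptive-Resistant region, for each $(j,i)\in E$ we have $\bias{i,j}(B_j - B_i) \leq \max(B) - B_i$, and symmetrically $\bias{i,j}(B_j - B_i) \geq \min(B) - B_i$. Weighting these by $\overline{I_{j,i}}$ and summing over $j \in A_i$, and observing that $\sum_{j \in A_i}\overline{I_{j,i}} = 1$ whenever $A_i \neq \emptyset$, one obtains $\mu(B)_i \leq B_i + (\max(B) - B_i) = \max(B)$ and symmetrically $\mu(B)_i \geq \min(B)$. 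The edge case $A_i = \emptyset$ is trivial: the sum is vacuous, so $\mu(B)_i = B_i$, which lies in $[\min(B),\max(B)]$ by definition.

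I do not anticipate any real obstacle here. All of the substantive work has already been carried out in Prop.\ref{noclampA} (which ensures the clamp never bites) and in Lem.\ref{boundnoclampA} (which contains the weighted-average estimate). The role of the Receptive-Resistant assumption is precisely to bracket each bias contribution between $\min(B)-B_i$ and $\max(B)-B_i$, which forces the updated value $\mu(B)_i$ to remain within $[\min(B),\max(B)]$ just as a genuine weighted average would.
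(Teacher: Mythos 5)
Your proposal matches the paper's proof exactly: the paper also derives this lemma as an immediate combination of Prop.\ref{noclampA} (which removes the clamp) and Lem.\ref{boundnoclampA} (which bounds the unclamped quantity $K_i$). Your optional re-derivation via Lem.\ref{aux1A}, including the $A_i=\emptyset$ edge case, also mirrors the paper's own proof of Lem.\ref{boundnoclampA}, so there is nothing to correct.
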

\begin{proof}
    This follows directly from the combination of Prop.\ref{noclampA} and Lem.\ref{boundnoclampA}. 
\end{proof}

\begin{lemma}\label{extremA}
    There are $U,L \in [0,1]$ such that $\lim_{t\rightarrow \infty} \max(B^t) = U$ and $\lim_{t\rightarrow \infty} \min(B^t) = L$.
\end{lemma}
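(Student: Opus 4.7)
The plan is to apply the Monotone Convergence Theorem to the two real-valued sequences $\{\max(B^t)\}_{t\in\N}$ and $\{\min(B^t)\}_{t\in\N}$. Both sequences live in the compact interval $[0,1]$ (since each coordinate $B^t_i \in [0,1]$), so they are automatically bounded. The only non-trivial thing to check is monotonicity.

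For monotonicity, I would invoke Lem.\ref{lemboundsA} (Update Bounds) applied to $B := B^t$. That lemma says $\min(B^t)\leq \mu(B^t)_i \leq \max(B^t)$ for every $i \in A$, and by definition $\mu(B^t)_i = B^{t+1}_i$. Since this bound holds simultaneously for every coordinate $i$, taking the maximum over $i \in A$ on the left-hand inequality and the minimum on the right-hand inequality yields
\begin{equation*}
\min(B^t) \leq \min(B^{t+1}) \qquad \text{and} \qquad \max(B^{t+1}) \leq \max(B^t).
\end{equation*}
Thus $\{\max(B^t)\}_{t\in\N}$ is monotonically non-increasing and $\{\min(B^t)\}_{t\in\N}$ is monotonically non-decreasing.

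Having established that both sequences are monotone and bounded in $[0,1]$, the Monotone Convergence Theorem \cite{Sohrab:14} guarantees that each has a limit in $[0,1]$. Denoting these limits by $U$ and $L$ respectively completes the proof. There is no real obstacle here: the entire argument is a direct consequence of Lem.\ref{lemboundsA}, which in turn encodes the fact that a bias update (when all biases lie in $\Rec$) produces at each coordinate a value that, by Prop.\ref{noclampA} and the weighted-average-style bound, cannot exceed the current maximum or fall below the current minimum.
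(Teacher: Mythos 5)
Your proof is correct and follows essentially the same route as the paper's: both derive monotonicity of $\{\max(B^t)\}$ and $\{\min(B^t)\}$ from Lem.\ref{lemboundsA} and then conclude by the Monotone Convergence Theorem. Your write-up just spells out the coordinatewise-to-extremum step slightly more explicitly.
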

\begin{proof}
    It follows directly from Lem.\ref{lemboundsA} that $\max(B^t)$ and $\min(B^t)$ are respectively monotonically non-increasing and monotonically non-decreasing with respect to $t$. Then, for any integer $t \geq 0$ we have $\max(B^{t+1}) \leq \max(B^t)$ and $\min(B^{t+1}) \geq \min(B^t)$. Thus, the result follows from the monotone convergence theorem\cite{Sohrab:14}.
\end{proof}

\begin{lemma}\label{mucontA}
	The function $\mu$ is continuous.
\end{lemma}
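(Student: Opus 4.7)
The plan is to reduce the proof to a composition of elementary continuous maps. The key observation is that, thanks to Prop.\ref{noclampA}, on the domain $[0,1]^{|A|}$ the clamp $[\cdot]_0^1$ in the definition of $\mu$ is inactive when every bias lies in $\Rec$. Hence for every $i \in A$,
\begin{equation*}
\mu(B)_i = B_i + \sum_{j \in A_i} \overline{I_{j,i}}\, \bias{i,j}(B_j - B_i),
\end{equation*}
which avoids having to argue continuity of a potentially non-smooth clamping operation.

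First I would recall that a function $\mu : [0,1]^{|A|} \to [0,1]^{|A|}$ is continuous if and only if each of its coordinate functions $B \mapsto \mu(B)_i$ is continuous. So fix an arbitrary $i \in A$ and examine the right-hand side above. The coordinate projections $B \mapsto B_i$ and $B \mapsto B_j$ are continuous, so their difference $B \mapsto B_j - B_i$ is a continuous map from $[0,1]^{|A|}$ into $[-1,1]$. Composing with $\bias{i,j}$, which is continuous on $[-1,1]$ by Rmk.\ref{rmkcontA}, gives a continuous map $B \mapsto \bias{i,j}(B_j - B_i)$.

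Then I would use the standard fact that finite sums and scalar multiples of continuous real-valued functions are continuous. Multiplying each term by the constant $\overline{I_{j,i}} \in [0,1]$, summing over the finite set $A_i$, and adding the continuous term $B_i$ yields a continuous function $B \mapsto \mu(B)_i$. Since this holds for every $i \in A$, the map $\mu$ is continuous on $[0,1]^{|A|}$, as claimed.

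There is no real obstacle here: once Prop.\ref{noclampA} is invoked to remove the clamp, the argument is purely a matter of composing standard continuity facts. The only subtlety worth stating explicitly is that continuity is asserted on the domain $[0,1]^{|A|}$, which is precisely the domain on which Prop.\ref{noclampA} applies, so the equivalence between the clamped and unclamped forms is valid throughout the argument.
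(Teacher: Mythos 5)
Your proposal is correct and follows essentially the same route as the paper: both invoke Prop.\ref{noclampA} to discard the clamp and then establish continuity coordinate-wise by combining the continuity of the biases $\bias{i,j}$ (Rmk.\ref{rmkcontA}) with standard facts about sums and compositions of continuous functions. The paper merely phrases this as an explicit limit computation $\lim_{Y\to B}\mu(Y)_i=\mu(B)_i$, which is the same argument in different notation.
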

\begin{proof}
	Note that we can compute $\lim_{Y\rightarrow B}\mu(Y)$ from it's individual values: let $i\in A$. Then, by Lem.\ref{noclampA}, we get $\mu(Y)_i = Y_i + \sum_{j\in A_i}\overline{I_{j,i}}\beta_{i,j}(Y_j-Y_i)$. Also, $Y\rightarrow B$ implies that $Y_i \rightarrow \B{i}$ for all $i\in A$. Then, $\lim_{Y\rightarrow B}\mu(Y)_i = \lim_{Y\rightarrow B}( Y_i+\sum_{j\in A_i}\overline{I_{j,i}}\beta_{i,j}(Y_j-Y_i)) = \B{i}+\sum_{j\in A_i}\overline{I_{j,i}}\lim_{Y\rightarrow B}\beta_{i,j}(Y_j-Y_i)$. Because all $\beta_{i,j}$ are continuous (Rem.\ref{rmkcontA}), this is equal to $\B{i}+\sum_{j\in A_i}\overline{I_{j,i}}\Fdif{j}{i} = \mu(B)$, thus concluding our proof, as $B$ is arbitrary. 
\end{proof}

\begin{lemma}[Extreme Agents Reduction, Lem.\ref{mainlemma} from Sec.\ref{sec:biasAndConsensus}]\label{mainlemmaA} Suppose that $\min(B) \neq \max(B)$ and let  $M = \max(B)$ If $G$ has a path $i_1\ldots i_m$ such that $\B{i_m} = M$ and $\B{i_1} < M$, then $|\{ j \in A : B_j \geq M\}| > |\{ j \in A:\mu(B)_j \geq M\}|$. A symmetrical property applies to the minimum.
\end{lemma}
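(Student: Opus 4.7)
\medskip

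\noindent\textbf{Proof plan for Lemma~\ref{mainlemmaA}.} The plan is to reduce the count inequality to two observations about the level set at height $M$: first, no agent below $M$ can reach $M$ after one application of $\mu$; second, at least one agent currently at $M$ strictly drops below $M$. Since Lemma~\ref{lemboundsA} already gives $\mu(B)_j \leq \max(B) = M$ for every $j$, the condition $\mu(B)_j \geq M$ is equivalent to $\mu(B)_j = M$, so it suffices to compare the sets $S_1 = \{j \in A : B_j = M\}$ and $S_2 = \{j \in A : \mu(B)_j = M\}$ and show $S_2 \subsetneq S_1$.

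For the inclusion $S_2 \subseteq S_1$, I would argue that if $B_j < M$, then $\mu(B)_j < M$. Using Prop.~\ref{noclampA} I write
\[
\mu(B)_j = B_j + \sum_{k \in A_j} \overline{I_{k,j}}\,\beta_{j,k}(B_k - B_j).
\]
If $A_j = \emptyset$, then $\mu(B)_j = B_j < M$. Otherwise, $\sum_{k \in A_j} \overline{I_{k,j}} = 1$, and Lem.~\ref{aux1A} gives $\beta_{j,k}(B_k - B_j) \leq M - B_j$ for every $k \in A_j$, with strict inequality whenever $B_k \neq B_j$; when $B_k = B_j < M$ the bias term is $0$, which is still strictly less than $M - B_j > 0$. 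Either way each summand is bounded by $M - B_j$ and at least one summand is strictly smaller (since the constant case also yields $0 < M - B_j$), so $\mu(B)_j < B_j + (M - B_j) = M$.

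For the strict inclusion $S_2 \subsetneq S_1$, I would use the hypothesized path $i_1 \ldots i_m$. Let $k$ be the smallest index with $B_{i_k} = M$; such a $k$ exists because $B_{i_m} = M$, and $k \geq 2$ because $B_{i_1} < M$. By minimality, $B_{i_{k-1}} < M$, and the edge $(i_{k-1}, i_k) \in E$ puts $i_{k-1} \in A_{i_k}$. Since $\beta_{i_k, i_{k-1}}$ is in $\Rec$ and $B_{i_{k-1}} - B_{i_k} < 0$, we get $\beta_{i_k, i_{k-1}}(B_{i_{k-1}} - B_{i_k}) < 0$. For every other influencer $j \in A_{i_k}$, $B_j \leq M = B_{i_k}$ forces $\beta_{i_k, j}(B_j - B_{i_k}) \leq 0$ (again by $\Rec$-membership). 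Substituting into Prop.~\ref{noclampA} and using $\overline{I_{i_{k-1}, i_k}} > 0$, I conclude $\mu(B)_{i_k} < B_{i_k} = M$, so $i_k \in S_1 \setminus S_2$. The symmetric statement for the minimum is proved by the dual argument, replacing the upper bound by the lower bound in Lem.~\ref{aux1A}.

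The main subtlety, rather than a real obstacle, is making sure the strict inequality in the summation really survives: one must handle separately the degenerate case where all influencers of a sub-maximal agent happen to share its belief, since then every bias term vanishes and the strict drop from $M$ comes purely from $B_j < M$ rather than from the bias. The other delicate point is to verify that the path-produced agent $i_k$ has at least one influencer strictly below $M$, which is exactly what minimality of the index $k$ guarantees.
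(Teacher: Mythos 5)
Your proposal is correct and follows essentially the same route as the paper's proof: both establish that no agent strictly below $M=\max(B)$ can reach $M$ after one update (via Prop.~\ref{noclampA} and Lem.~\ref{aux1A}), and both use the minimal index $k$ on the given path with $B_{i_k}=M$ to exhibit an agent at $M$ whose value strictly decreases because its influencer $i_{k-1}$ sits strictly below $M$. Your explicit reformulation as the strict inclusion $S_2\subsetneq S_1$ and your separate treatment of the degenerate case where all influencers share the agent's belief are only presentational refinements of the same argument.
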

\begin{proof}
	First note that by Lem.\ref{lemboundsA}, the maximum can not increase, which means that $\max(\mu(B)) \leq \max(B)$, and also by Lem.\ref{noclampA} the clamp function has no impact on the value of $\mu(B)$.

	Now note that if $\B{i} < \max(B)$, then if $A_i \neq \emptyset$, we can use Lem.\ref{aux1A} to conclude $(\mu(B))_i = \B{i} + \sum_{j\in A_i}\overline{I_{j,i}}\Fdif{j}{i} < \B{i}+\sum_{j\in A_i}\overline{I_{j,i}}(\max(B)-\B{i}) = \B{i} + (\sum_{j\in A_i} \overline{I_{j,i}})(\max(B)-\B{i}) = \B{i} + \max(B)-\B{i} = \max(B)$. Also, if $A_i = \emptyset$, then $(\mu(B))_i = \B{i} < \max(B)$. This means that the number of agents with belief value greater than or equal to $\max(B)$ cannot increase. We proceed to prove that at least one agent with belief $\max(B)$ will decrease it's belief value after the application of $\mu$.

	Let $i_1i_2i_3...i_m$ be the path declared in the assumption. Let $1\leq y\leq n$ be the smallest value such that $\B{i_y} = \max(B)$. Then $\B{i_{y-1}} < \max(B)$ by construction, and $A_y \neq \emptyset$ (because $y$ is influenced by $y-1$). We proceed to prove that $(\mu(B))_{i_y} < \max(B)$, which means $i_y$ is an agent with belief $\max(B)$ that will decrease it's belief after the application of $\mu$, which concludes our proof. Let $y-1 = u$

	\begin{align*}
		(\mu(B))_{i_y} &= \B{i_y} + \sum_{j\in A_{i_y}}\overline{I_{j,i_y}}\Fdif{j}{i_y}\\
		&= \B{i_y} + \overline{I_{i_u,i_y}}\Fdif{i_{u}}{i_y}+\sum_{j\in A_{i_y}\backslash \{i_{u}\}}\overline{I_{j,i_y}}\beta_{i_{y},j}(\B{j}-\B{i_y})\\
		&\leq \B{i_y} + \overline{I_{i_u,i_y}}\Fdif{i_{u}}{i_y}+(\sum_{j\in A_{i_y}\backslash \{i_{u}\}}\overline{I_{j,i_y}})(\max(B)-\B{i_y})\\
		&= \B{i_y} + \overline{I_{i_u,i_y}}\Fdif{i_{u}}{i_y}\\
		&< \B{i_y}\\
		&= \max(B)
	\end{align*}
    The proof for the minimum is symmetrical.
\end{proof}

\begin{remark}[Composition of $\mu$]\label{compMu}
    We consider that $\mu^0(B) = B$ and $\mu^k{B} = \mu^{k-1}(\mu(B))$ for any integer $k > 0$.
\end{remark}

\begin{lemma}\label{decmaxA}
    Suppose that $\min(B) \neq \max(B)$ and let  $M = \max(B)$. If $G$ has a path $i_1\ldots i_m$ such that $\B{i_m} = M$ and $\B{i_1} < M$, then $\max(\mu^{|A|-1}(B)) < \max(B)$. A symmetrical property applies to the minimum.
\end{lemma}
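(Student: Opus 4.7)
The plan is to iterate Lem.\ref{mainlemmaA} so that the number of agents whose current belief equals the original maximum $M := \max(B)$ strictly decreases at each step. Since Lem.\ref{lemboundsA} gives $\max(\mu^t(B)) \leq M$ for all $t \geq 0$, defining $T_t := \{j \in A : \mu^t(B)_j \geq M\} = \{j \in A : \mu^t(B)_j = M\}$ reduces the target conclusion $\max(\mu^{|A|-1}(B)) < M$ to showing $T_{|A|-1} = \emptyset$.

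First I would establish the monotonicity $T_{t+1} \subseteq T_t$, that is, an agent whose belief is strictly below $M$ at time $t$ remains strictly below $M$ at time $t+1$. For such an agent $i$, combining Prop.\ref{noclampA} to remove the clamp, Lem.\ref{aux1A} (applied with its parameter $M$ equal to our fixed $M$, which is valid by Lem.\ref{lemboundsA}) to bound each bias contribution $\beta_{i,j}(\mu^t(B)_j - \mu^t(B)_i)$ by $M - \mu^t(B)_i$, and the identity $\sum_{j \in A_i}\overline{I_{j,i}} = 1$ when $A_i$ is nonempty (with the empty case being immediate), yields $\mu^{t+1}(B)_i < M$. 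In particular the hypothesized agent $i_1 \notin T_0$ stays outside every $T_t$, so $|T_0| \leq |A|-1$ and $A \setminus T_t$ is nonempty at every step.

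The core step then iterates Lem.\ref{mainlemmaA}: as long as $T_t$ is nonempty, $\max(\mu^t(B)) = M$ and the condition $\min(\mu^t(B)) < \max(\mu^t(B))$ holds because $\mu^t(B)_{i_1} < M$. The main obstacle is verifying the path hypothesis of Lem.\ref{mainlemmaA} at step $t$, namely exhibiting a path in $G$ from some current non-max agent to some current max agent. This is precisely where the ambient strong connectivity of $G$---the hypothesis of Th.\ref{main:theorem} in whose proof this lemma is applied---is essential: for any nonempty $T_t \subsetneq A$, strong connectivity furnishes a path from $i_1 \in A \setminus T_t$ to any chosen $k \in T_t$, witnessing the hypothesis. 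Without this ambient assumption the iteration can stall; for instance an agent with only a self-loop whose belief starts at $M$ would remain in $T_t$ forever. Lem.\ref{mainlemmaA} then delivers $|T_{t+1}| < |T_t|$, so the strictly decreasing sequence of sizes must reach zero within $|T_0| \leq |A|-1$ steps, giving $T_{|A|-1} = \emptyset$ and thus $\max(\mu^{|A|-1}(B)) < M$. The symmetric claim for the minimum follows by the dual argument, using the symmetric halves of Lem.\ref{aux1A} and Lem.\ref{mainlemmaA}.
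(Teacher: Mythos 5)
Your proof follows the same route as the paper's: iterate Lem.\ref{mainlemmaA} so that the number of agents sitting at the original maximum $M$ strictly decreases, and conclude that this count reaches zero within $|A|-1$ steps. The monotonicity $T_{t+1}\subseteq T_t$, the role of $i_1$ in forcing $|T_0|\le |A|-1$, and the final appeal to Lem.\ref{lemboundsA} all match the paper's argument (the paper phrases the counting via the minimality of the first time $k$ at which no agent remains at $M$, but the content is the same).

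The substantive divergence is that you import strong connectivity of $G$ to re-establish the path hypothesis of Lem.\ref{mainlemmaA} at every iteration, and you are right that something extra is needed: the lemma as stated only guarantees a path from a sub-maximal agent to a maximal agent at time $0$, and this can fail at later times. Indeed, your stalling observation extends to a genuine counterexample to the statement as written: take three agents with $B=(0,1,1)$, edges $1\to 2$, $3\to 2$, and a self-loop on $3$ only; the path $1\to 2$ satisfies the hypothesis, yet agent $3$ stays at $M=1$ forever. The paper's own proof silently assumes both that the first emptying time $k$ exists and that Lem.\ref{mainlemmaA}'s path hypothesis persists at each step, so you have caught a real gap rather than introduced one. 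Be aware, however, that your repair narrows the lemma: it is also invoked in the proof of Thm.\ref{consens2A}, where $G$ is \emph{not} strongly connected, so appealing to strong connectivity cannot support that second use. A hypothesis-local fix covering both applications is to strengthen the path assumption to ``every agent with belief $M$ is reachable from some agent with belief $<M$'': this property is preserved by $\mu$ (since $T_{t+1}\subseteq T_t$ and sub-maximal agents stay sub-maximal, by the very bound you prove), it supplies the path needed by Lem.\ref{mainlemmaA} whenever $T_t\neq\emptyset$, and it holds at time $0$ both under strong connectivity in Thm.\ref{main:theoremA} and via reachability from source components in Thm.\ref{consens2A}.
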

\begin{proof}
    Note that $B$ has at most $|A|-1$ agents with belief value $\max(B)$, otherwise $\min(B) = \max(B)$. Then, by Lem.\ref{mainlemmaA}, in $\mu(B)$ we have at most $|A|-2$ agents with belief value $\max(B)$. Let $k>0$ be the smallest integer such that $\mu^k(B)$ has no agent with belief $\max(B)$, and note that this means that $\mu^{k-1}(B)$ has at least one agent with belief $\max(B)$. We first prove that that $k\leq |A|-1$:

    We know $\mu^{k-1}(B)$ has at least one agent with belief $\max(B)$, and by Lem.\ref{mainlemmaA}, for all $k \geq y > 0$ we know that the number of agents in $\mu^{y-1}(B)$ with belief $\max(B)$ is at least one plus the number of agents with that belief in $\mu^y(B)$. This means that for all $k\geq z\geq 0$, $\mu^{z}(B)$ has at least $k-z$ agents with belief value $\max(B)$. If (for the sake of contradiction) $k\geq |A|$, $\mu^0(B) = B$ would have at least $k-0=k\geq |A|$ agents with belief greater than or equal to $\max(B)$, and because this would mean $B$ is consensus, we have a contradiction with $\max(B) \neq \min(B)$.

    Now note that, by Lem.\ref{lemboundsA} if $\mu^k(B)$ has no agent with belief $\max(B)$, then this means that $\max(\mu^k(B)) < \max(B)$, and it also means that for all $z \geq k$, $\max(\mu^z(B)) \leq \max(\mu^k(B))<\max(B)$. So, as we just proved $k \leq |A|-1$, $\max(\mu^{|A|-1}(B)) \leq \max(\mu^k(B)) < \max(B)$, as desired.

\end{proof}

\begin{remark}[Source Components]\label{dsources}
    A \emph{strongly-connected component} of $G$ is a \emph{maximal subset} $S \subseteq A$ such that for each two $i,j\in S$, there is path from $i$ to $j$. A strongly-connected component $S$ is said to be a \emph{source component}  iff no agent in $S$ is influenced by an agent outside $S$, i.e.,  iff there is no edge $(i,j)\in E$ such that $i \in A\setminus S$ and $j\in S$. From now on we use $\source(G)$ to be the set of source components of $G$. 
\end{remark}

\begin{theorem}[Consensus I, Thm.\ref{main:theorem} of Sec.\ref{sec:biasAndConsensus}]\label{main:theoremA}
     Let  $(G,B^0,\mu)$ be a bias opinion model with a strongly connected graph $G = (A,E,I)$. Suppose that for every $(j,i)\in E$, $\bias{i,j}$ is a continuous function in $\Rec$. Then the set of agents $A$ converges to  consensus. 
\end{theorem}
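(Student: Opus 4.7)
My plan is to follow the outline already sketched in Section 3.3 and package the supporting lemmas from the appendix into a clean convergence argument based on monotone convergence, Bolzano–Weierstrass, and continuity of $\mu$. The high-level strategy is to show that the sequences $\{\min(B^t)\}_{t\ge 0}$ and $\{\max(B^t)\}_{t\ge 0}$ both converge, and then argue by contradiction that they must share the same limit; the Squeeze Theorem then forces every coordinate $B^t_i$ to converge to that common limit, which is exactly consensus.

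First I would invoke Lemma~\ref{lembounds} (Update Bounds) to conclude that $\max(B^{t+1})\le \max(B^t)$ and $\min(B^{t+1})\ge \min(B^t)$ for every $t$. Since both sequences are bounded in $[0,1]$, the Monotone Convergence Theorem gives limits $U=\lim_{t\to\infty}\max(B^t)$ and $L=\lim_{t\to\infty}\min(B^t)$ with $L\le U$. The theorem will then be proved once we show $L=U$. To this end, I would assume for contradiction that $L<U$ and work at the level of the whole trajectory $\{B^t\}_{t\in\N}\subseteq [0,1]^{|A|}$, which is a bounded sequence in a finite-dimensional Euclidean space.

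Next, I would apply the Bolzano–Weierstrass Theorem to extract a subsequence $\{B^t\}_{t\in\Delta}$ converging to some limit point $B^{\infty}\in[0,1]^{|A|}$. The key observation is that $\max(\cdot)$ and $\min(\cdot)$ are continuous, so $\max(B^\infty)=U$ and $\min(B^\infty)=L$. Moreover, by Lemma~\ref{mucontA} (continuity of $\mu$, which follows from the continuity of every $\bias{i,j}$ via Proposition~\ref{no-clamp:prop}), the subsequence $\{\mu(B^t)\}_{t\in\Delta}=\{B^{t+1}\}_{t\in\Delta}$ converges to $\mu(B^\infty)$. Since $\{\max(B^{t+1})\}_{t\in\Delta}$ is a subsequence of the convergent sequence $\{\max(B^t)\}_{t\in\N}$, it must also converge to $U$, and hence $\max(\mu(B^\infty))=U=\max(B^\infty)$. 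The same reasoning gives $\min(\mu(B^\infty))=L=\min(B^\infty)$. Iterating this argument (as formalized in Corollary~\ref{maincor2}), I obtain $\max(\mu^s(B^\infty))=U$ and $\min(\mu^s(B^\infty))=L$ for every $s\ge 0$.

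Finally, I would derive a contradiction using Corollary~\ref{maincor}, which is where strong connectedness enters essentially. Since $L<U$, the state $B^\infty$ is not consensus, so by strong connectedness the graph $G$ contains a path from some agent with value strictly below $U$ to some agent attaining $U$ (and symmetrically for $L$). Lemma~\ref{mainlemma} then forces the number of agents attaining $U$ to strictly decrease under $\mu$; iterating at most $|A|-1$ times (Lemma~\ref{decmaxA}) yields some $s>0$ with $\max(\mu^s(B^\infty))<U$, contradicting what the previous paragraph established. Hence $L=U$, and the Squeeze Theorem $\min(B^t)\le B^t_i\le \max(B^t)$ gives $\lim_{t\to\infty}B^t_i=L=U$ for every agent $i\in A$, which is consensus. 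The main obstacle is the bookkeeping in the subsequence-plus-continuity step: one must be careful that the limits of $\max(B^{t+1})$ along $\Delta$ and along $\N$ agree, and then to iterate $\mu$ on $B^\infty$ while preserving the maximum and minimum long enough for Lemma~\ref{decmaxA} to bite.
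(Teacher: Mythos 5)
Your proposal is correct and follows essentially the same route as the paper's own proof: monotone convergence of the extremal sequences, Bolzano--Weierstrass to extract a limit point $B^{\infty}$, continuity of $\mu$ plus the subsequence bookkeeping to show $\max(\mu^{s}(B^{\infty}))=\max(B^{\infty})$ (Cor.~\ref{maincor2}), and a contradiction with Lem.~\ref{mainlemma}/Lem.~\ref{decmaxA} via strong connectedness, finishing with the Squeeze Theorem. The only cosmetic difference is that the paper works directly with the shifted subsequence $\{B^{t+|A|-1}\}_{t\in\Delta}$ at the single exponent $k=|A|-1$ rather than iterating the fixed-maximum argument step by step, but these are equivalent.
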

\begin{proof}
    By Bolzano–Weierstrass theorem\cite{Sohrab:14}, for all integers $t\geq 0$, as all values of $B^t$ are bounded in $[0,1]$, there is a convergent infinite subsequence of the sequence $\{B^t\}_{t\in\N}$. Let this be $\{B^t\}_{t\in \Delta}$, and let the value to which it converges be $B^\infty$. Now let $\Gamma^k = \{B^{t+k}\}_{t\in\Delta}$, and note that this is equal to $\{\mu^k(B^t)\}_{t\in \Delta}$. By Lem.\ref{mucontA}, this sequence converges to $\mu^k(B^\infty)$, because the composition of continuous functions is also continuous. We will use $k = |A|-1$ in this theorem.

    Let $\lim_{t\rightarrow\infty}\max(B^t) = U$ (this limit exists because of Lem.\ref{extremA}), and, for the sake of contradiction, suppose $B^\infty$ is not consensus. Then, note that, as assume that the graph is strongly connected we can guarantee that the path assumption of Lem.\ref{decmaxA} is satisfied for $B=B^\infty$, with $i_1\in A$ such that $B_{i_1}^\infty = \min(B^\infty)$ and $i_m\in A$ such that $B_{i_m}^\infty = \max(B^\infty)$. Then, by Lem.\ref{decmaxA}, $\max(\mu^{|A|-1}(B^\infty)) < \max(B^\infty)$. 

    We can reach a contradiction of this using the fact that $\{\max(B^t)\}_{t \in \Delta}$ (which converges to $\max(B^\infty)$) is a subsequence of the convergent sequence $\{\max(B^t)\}_{t \in \N}$. As the latter converges to $U$, so does the former, so $U = \max(B^\infty)$. Also, $\{\mu^{|A|-1}(B^t)\}_{t \in \Delta}$ converges to $\mu^{|A|-1}(B^\infty)$ as $\mu$ is continuous. We also notice that $\{\max(\mu^{|A|-1}(B^t))\}_{t \in \Delta} = \{\max(B^{t+|A|-1})\}_{t \in \Delta}$ converges to the value $\max(\mu^{|A|-1}(B^\infty))$ and is a subsequence of the convergent sequence $\{\max(B^t)\}_{t \in \N}$, so both converge to the same value $\max(\mu^{|A|-1}(B^\infty)) = U$. So we conclude that $\max(\mu^{|A|-1}(B^\infty)) = U = \max(B^\infty)$ which is a contradiction with the inequality  $\max(\mu^{|A|-1}(B^\infty)) < \max(B^\infty)$. Thus, $B^\infty$ must be consensus.

    Let $v$ be the value such that for all $i \in A$, $B_i^\infty = v$. As $\{\max(B^t)\}_{t \in \Delta}$ (which converges to $\max(B^\infty) = v$) is a subsequence of the convergent sequence $\{\max(B^t)\}_{t \in \N}$, both converge to the same value, so $\lim_{t\rightarrow \infty} \max(B^t) = v$. The same applies for the minimum, so $\lim_{t\rightarrow \infty} \max(B^t) = v = \lim_{t\rightarrow \infty} \min(B^t)$, and as $\max(B^t) \leq B_i^t \leq \min(B^t)$ for all $i \in A$ (Lem.\ref{lemboundsA}), by the Squeeze Theorem\cite{Sohrab:14}, $A$ converges to consensus.

\end{proof}

\begin{theorem}[Consensus II, Thm. \ref{consens2} of Sec.\ref{sec:biasAndConsensus}]\label{consens2A} Let  $(G,B^0,\mu)$ be a bias opinion model with $G=(A,E,I)$. Suppose that for every $(j,i)\in E$, $\bias{i,j}$ is a continuous function in $\Rec$.
Then the set of agents $A$ converges to consensus iff there exists $v\in[0,1]$ such that every source component $S \in  \source(G)$ converges to opinion $v$.
\end{theorem}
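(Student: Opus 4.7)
The plan is to prove both directions separately, with the forward direction being immediate and the backward direction requiring an induction over the DAG of strongly connected components (SCCs) of $G$. For the forward direction ($\Rightarrow$), if $A$ converges to some $v \in [0,1]$, then each $i \in A$ satisfies $\lim_t B^t_i = v$, so every source component, being a subset of $A$, converges to the same $v$.

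For the backward direction ($\Leftarrow$), I would topologically sort the SCCs of $G$ as $C_1,\ldots,C_p$, with the source components $C_1,\ldots,C_q$ listed first, and proceed by induction on $k$ to show that $C_k$ converges to $v$. The base case ($k \leq q$) is exactly the hypothesis. For the inductive step, every external influence on $C_k$ comes from $C_1 \cup \cdots \cup C_{k-1}$ by the topological order, so each such influencer already converges to $v$ by the inductive hypothesis. Thus it suffices to establish the following lemma: if $\tilde{C} \subseteq A$ is strongly connected as a subgraph and every external influencer $j$ (i.e., $j \in A \setminus \tilde{C}$ with $(j,i) \in E$ for some $i \in \tilde{C}$) converges to $v$, then $\tilde{C}$ converges to $v$.

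To prove the lemma, fix $\epsilon > 0$ and pick $T$ large enough that $|B^t_j - v| < \epsilon$ for all $t \geq T$ and every external influencer $j$ of $\tilde{C}$. Writing $M_{\tilde{C}}(t) = \max_{i \in \tilde{C}} B^t_i$, the $\Rec$ property together with Prop.\ref{noclampA} yields the pseudo-monotonicity bound $M_{\tilde{C}}(t+1) \leq \max(M_{\tilde{C}}(t), v + \epsilon)$ for $t \geq T$. This implies either $M_{\tilde{C}}(t) \leq v + \epsilon$ from some time on (so $\limsup_t M_{\tilde{C}}(t) \leq v + \epsilon$), or $M_{\tilde{C}}$ is monotone non-increasing and converges to some $M^* \geq v + \epsilon$. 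In the latter case, I would extract via Bolzano–Weierstrass a subsequence $\{B^t\}_{t \in \Delta} \to B^\infty$ with $\max_{i \in \tilde{C}} B^\infty_i = M^*$ and, by the inductive hypothesis applied to each external influencer, $B^\infty_j = v$ for every such $j$. I then adapt Lem.\ref{mainlemmaA} to the subset $\tilde{C}$: either every agent of $\tilde{C}$ already equals $M^*$, in which case any agent receiving external influence is pulled strictly below $M^*$ by the external value $v < M^*$, or strong connectivity of $\tilde{C}$ provides an internal path from an agent below $M^*$ to one at $M^*$, triggering the original argument of Lem.\ref{mainlemmaA}. Iterating at most $|\tilde{C}|$ times yields $\max_{i \in \tilde{C}} \mu^{|\tilde{C}|}(B^\infty)_i < M^*$; combined with continuity of $\mu$ and the fact that $M_{\tilde{C}}(t + |\tilde{C}|) \to \max_{i \in \tilde{C}} \mu^{|\tilde{C}|}(B^\infty)_i$ along $\Delta$, this contradicts $M_{\tilde{C}}(t) \to M^*$. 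Hence $\limsup_t M_{\tilde{C}}(t) \leq v + \epsilon$ for every $\epsilon > 0$. A symmetric argument for the minimum and the Squeeze theorem then force every agent in $\tilde{C}$ to converge to $v$.

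The main obstacle is the subset adaptation of Lem.\ref{mainlemmaA}: one must carefully justify that along the limit $B^\infty$ the external values really equal $v$ (using the inductive hypothesis), that these external values are preserved under further iterations of $\mu$ applied to $B^\infty$ (again via continuity together with the inductive hypothesis that previously processed SCCs converge to $v$), and that the strict-decrease step applies in both the uniform-maximum and path-from-below cases. The topological-sort induction cleanly resolves the dependencies among SCCs so that at each stage the required external convergence is already available.
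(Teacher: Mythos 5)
Your proof is correct, but it is organized quite differently from the paper's. The paper proves the nontrivial direction in one global step: it applies Bolzano--Weierstrass to the full state sequence to obtain a limit point $B^\infty$, observes that every agent of every source component sits exactly at $v$ in $B^\infty$ (and stays there under iteration of $\mu$), and then---since every vertex of $G$ is reachable from some source component---feeds the resulting path from a $v$-valued agent to a maximizing agent directly into the global Lemmas \ref{mainlemmaA} and \ref{decmaxA} to force $\max(\mu^{|A|-1}(B^\infty))<\max(B^\infty)$, contradicting convergence of the monotone sequence $\{\max(B^t)\}_{t\in\N}$. You instead induct over the topologically sorted condensation of $G$ and isolate a ``cascade'' lemma: a strongly connected subgraph all of whose external influencers converge to $v$ must itself converge to $v$. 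Because in your setting the external inputs only converge to $v$ rather than equal it, you need the extra $\varepsilon$-machinery (the bound $M_{\tilde C}(t+1)\le\max(M_{\tilde C}(t),v+\varepsilon)$, the dichotomy, and the $\limsup$ argument) plus a subset adaptation of Lemma \ref{mainlemmaA}; the key steps you flag as obstacles do all go through: a non-source SCC necessarily contains an externally influenced agent, the value $v$ on earlier SCCs is preserved under iteration of $\mu$ from $B^\infty$ by continuity and the inductive hypothesis, and the strict decrease holds in both the uniform-maximum and path-from-below cases. The paper's route is shorter because it reuses the strongly-connected machinery essentially verbatim on the whole graph; yours is longer but more modular, and the cascade lemma is a reusable statement about convergence propagating down the condensation DAG.
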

\begin{proof}
    If there is no $v \in [0,1]$ such that all source components converges to $v$, then $A$ does not converge to consensus by definition.

    If there is such a $v \in [0,1]$, then we only need to show that the belief value of agents outside of source components converge to $v$. 

    By Bolzano–Weierstrass theorem\cite{Sohrab:14}, for all integers $t\geq 0$, as all values of $B^t$ are bounded in $[0,1]$, there is a convergent infinite subsequence of the sequence $\{B^t\}_{t\in\N}$. Let this be $\{B^t\}_{t\in \Delta}$, and let the value to which it converges be $B^\infty$. Now let $\Gamma^k = \{B^{t+k}\}_{t\in\Delta}$, and note that this is equal to $\{\mu^k(B^t)\}_{t\in \Delta}$. By Lem.\ref{mucontA}, this sequence converges to $\mu^k(B^\infty)$, because the composition of continuous functions is also continuous. We will use $k = |A|-1$ in this theorem.

    Let $\max(B^\infty) = U$, and for the sake of contradiction, suppose $B^\infty$ is not consensus. Then at least one of the following is satisfied: $\min(B^\infty) \neq v$, $\max(B^\infty) \neq v$. We will assume $\max(B^\infty) \neq v$, if that's not the case we can get a symmetric proof with $\min(B^\infty) \neq v$.

    Also note that, as we assume that the belief value of all agents in source components converge to $v$, then their belief values will also converge to $v$ in the subsequence $\{\Bt{i}{t}\}_{t \in \Delta}$. So if $i \in S$ for some $S \in \source(G)$, then $\Bt{i}{\infty} = v$. So can guarantee that the path assumption of Lem.\ref{decmaxA} is satisfied, with $i_1\in S$ for some $\source(G)$ (so $(\mu^l(B^\infty))_{i_1} = v$) and $i_m\in A$ such that $(\mu^l(B^\infty))_{i_m} = \max(\mu^l(B^\infty))$. Then, by Lem.\ref{decmaxA}, $\max(\mu^{|A|-1}(B^\infty)) < \max(B^\infty)$. 

    We can reach a contradiction of this using the fact that $\{\max(B^t)\}_{t \in \Delta}$ (which converges to $\max(B^\infty)$) is a subsequence of the convergent sequence $\{\max(B^t)\}_{t \in \N}$. As the latter converges to $U$, so does the former, so $U = \max(B^\infty)$. Also, $\{\mu^{|A|-1}(B^t)\}_{t \in \Delta}$ converges to $\mu^{|A|-1}(B^\infty)$ as $\mu$ is continuous. We also notice that $\{\max(\mu^{|A|-1}(B^t))\}_{t \in \Delta} = \{\max(B^{t+|A|-1})\}_{t \in \Delta}$ converges to the value $\max(\mu^{|A|-1}(B^\infty))$ and is a subsequence of the convergent sequence $\{\max(B^t)\}_{t \in \N}$, so both converge to the same value $\max(\mu^{|A|-1}(B^\infty)) = U$. So we conclude that $\max(\mu^{|A|-1}(B^\infty)) = U = \max(B^\infty)$ which is a contradiction with the inequality $\max(\mu^{|A|-1}(B^\infty)) < \max(B^\infty)$. Thus, $B^\infty$ must be consensus.

    Let $v$ be the value such that for all $i \in A$, $B_i^\infty = v$. As $\{\max(B^t)\}_{t \in \Delta}$ (which converges to $\max(B^\infty) = v$) is a subsequence of the convergent sequence $\{\max(B^t)\}_{t \in \N}$, both converge to the same value, so $\lim_{t\rightarrow \infty} \max(B^t) = v$. The same applies for the minimum, so $\lim_{t\rightarrow \infty} \max(B^t) = v = \lim_{t\rightarrow \infty} \min(B^t)$, and as $\max(B^t) \leq B_i^t \leq \min(B^t)$ for all $i \in A$ (Lem.\ref{lemboundsA}), by the Squeeze Theorem\cite{Sohrab:14}, $A$ converges to consensus. 

\end{proof}

\end{document}